\newcommand{\p}{\partial}
\newcommand{\const}{{\rm const}}
\newcommand{\rank}{\mathop{\rm rank}\nolimits}
\newcommand{\spanindex}{{\mbox{\tiny$\langle\,\rangle$}}}
\newtheorem{theorem}{Theorem}
\newtheorem{corollary}[theorem]{Corollary}
\newtheorem{conjecture}[theorem]{Conjecture}
\newtheorem{proposition}[theorem]{Proposition}
\newtheorem*{proposition*}{Proposition}
{\theoremstyle{definition}

\newtheorem{remark}[theorem]{Remark}

}
\newcommand{\todo}[1][\null]{\ensuremath{\clubsuit}}
\newcommand{\noprint}[1]{}
\newcounter{mcasenum}
\begin{document}
\par\noindent {\LARGE\bf
Lie symmetries of two-dimensional shallow water\\ equations with variable bottom topography\par}

\vspace{4mm}\par\noindent{\large
Alexander Bihlo$^\dag$, Nataliia Poltavets$^\dag$ and Roman O.\ Popovych$^\ddag$
}

\vspace{4mm}\par\noindent{\it
$^\dag$\,Department of Mathematics and Statistics, Memorial University of Newfoundland,\\
$\phantom{^\dag}$\,St.\ John's (NL) A1C 5S7, Canada
}

\vspace{2mm}\par\noindent{\it
$^\ddag$\,Fakult\"at f\"ur Mathematik, Universit\"at Wien, Oskar-Morgenstern-Platz 1, A-1090 Wien, Austria%
\\
$\phantom{^\ddag}$Institute of Mathematics of NAS of Ukraine, 3 Tereshchenkivska Str., 01024 Kyiv, Ukraine
}

\vspace{4mm}\par\noindent
E-mails:
abihlo@mun.ca,
natalkapoltavets@gmail.com,
rop@imath.kiev.ua

\vspace{6mm}\par\noindent\hspace*{8mm}\parbox{140mm}{\small\looseness=-1
We carry out the group classification of the class
of two-dimensional shallow water equations with variable bottom topography
using an optimized version of the method of furcate splitting.
The equivalence group of this class is found by the algebraic method.
Using algebraic techniques,
we construct additional point equivalences between some of the listed cases of Lie-symmetry extensions,
which are inequivalent up to transformations from the equivalence group. 
}\par\vspace{2mm}

\noprint{
MSC: 76M60, 86A05, 35B06, 35A30

76-XX   Fluid mechanics {For general continuum mechanics, see 74Axx, or other parts of 74-XX}
 76Mxx  Basic methods in fluid mechanics [See also 65-XX]
  76M60   Symmetry analysis, Lie group and algebra methods
86-XX  Geophysics [See also 76U05, 76V05]
 86Axx	Geophysics [See also 76U05, 76V05]
  86A05  Hydrology, hydrography, oceanography [See also 76Bxx, 76E20, 76Q05, 76Rxx, 76U05]
35-XX   Partial differential equations
  35A30   Geometric theory, characteristics, transformations [See also 58J70, 58J72]
  35B06   Symmetries, invariants, etc.

Keywords:
group classification of differential equations,
shallow water equations,
the method of furcate splitting,
Lie symmetries,
equivalence group,
equivalence groupoid
}

\section{Introduction}

The shallow water equations are among the most studied models in geophysical fluid dynamics.
Due to the ability 
of modeling both slow moving (Rossby) and fast moving (gravity) waves,
the shallow water equations provide an ideal test bed for the development of new numerical approaches
to be used for future numerical models in geophysical fluid dynamics,
see e.g. \cite{aech15a,brec19a,brec18a,cott14a,dele10a,flye12a,salm07a}.

Besides playing an important role as an intermediate-complexity model for designing new numerical approaches for weather and climate modeling,
the shallow water equations are still routinely used in research and operational tsunami propagation models~\cite{brec18a,tito97Ay,tito95a,tito98a}.
A main challenge arising in the application of the shallow water equations in ocean wave propagation is the need to incorporate a variable bottom topography,
since the height of the water column over the bottom of the ocean basin determines the phase speed of gravity waves.

\looseness=-1
Owing to the considerable interest in the shallow water equations, 
there is a large body of literature devoted to finding exact solutions and conservation laws for these equations.
Both are important for numerical considerations
since exact solutions can be used for benchmarking numerical methods
and conservation laws can be applied to checking the reliability of new numerical schemes.
The construction of exact solutions and conservation laws is already a challenging problem 
without considering variable bottom topographies. 
With variable bottom topography, exact solutions are mostly known for simple profiles, 
such as linear slopes~\cite{carr58a} or parabolic bowls~\cite{thac81a}.

Below, we briefly review some of the existing works on the one- and the two-dimensional shallow water equations 
within the framework of group analysis of differential equations, 
which are related to the present paper and include the computation of exact solutions and conservation laws. 
The main point of division in the various studies in this regard is whether Lagrangian or Eulerian coordinates are~used.

The two-dimensional shallow water equations
(as well as the semi-geostrophic equations that arise in meteorology and oceanography)
in Lagrangian coordinates over flat bottom topography were considered
from the point of view of the group analysis of differential equations, e.g., in~\cite{bila06Ay}.
Lie symmetries and certain potential and variational symmetries were computed therein,
and variational symmetries were used for finding first-order conservation laws according to Noether's theorem.
The one-dimensional shallow water equations over flat bottom topography were considered in~\cite{siri16a}
in both Lagrangian and Eulerian coordinates as a limit case of the Green--Naghdi model,
and their first-order conservation laws in Lagrangian coordinates with no counterparts among conservation laws in Eulerian variables were found.
Special first-order conservation laws for the one-dimensional case over variable bottom topography in Lagrangian coordinates
were constructed in~\cite{akse20a} using their relation to hydrodynamic conservation laws
of a potential system for the shallow water equations in Eulerian variables.

\looseness=-1
In Eulerian variables,
Lie symmetries of the two-dimensional shallow water equations with parabolic bottom topography were computed in~\cite{levi89By}
and then used for finding exact solutions via classifying classical Lie reductions.
It was shown in~\cite{chir14a} that in one dimension,
the shallow water equations with a linear bottom topography can be mapped
to the shallow water equations with flat bottom topography by a point transformation.
Classical symmetry analysis of a modified system of one-dimensional shallow water equations,
including the construction of the maximal Lie invariance algebra of this system
and the classification of invariant solutions, was carried out in~\cite{szat14a}.
Lie symmetries and zeroth-order conservation laws
in the one-dimensional case with variable bottom topography were described in~\cite{akse16a}, 
and the geometric structure of self-similar solutions of the second kind for the flat bottom topography 
was studied in~\cite{cama19a}. 
Therein an excellent comprehensive review of previous results related to 
the symmetry analysis of the latter model was presented. 
The two-dimensional shallow water equations with constant Coriolis force were investigated in~\cite{ches09Ay},
where Lie symmetries were used to find a transformation relating this case to the shallow water equations in a resting reference frame.
This result was generalized in~\cite{ches11a}
via finding a point transformation mapping the shallow water equations over a constantly rotating parabolic basin
to the shallow water equations in a resting reference frame over a flat bottom topography.

Non-canonical Hamiltonian structures and generalized Hamiltonian structures of the shallow water equations
were considered in~\cite{salm07a}, see also~\cite{salm98a}.
In~\cite{salm07a}, these Hamiltonian structures were used to construct conservative numerical schemes for the shallow water equations,
a subject which was continued in~\cite{wan13Ay}, where conservation law characteristics were used for the same purpose.
Numerical schemes preserving Lie symmetries of the shallow water equations in Lagrangian and Eulerian coordinates
were considered in~\cite{bihl12By}.

In the present paper, we carry out the complete group classification of the class of systems
of two-dimensional shallow water equations with variable bottom topography,
which are of the form
\begin{equation}\label{eq:2DSWEs}
\begin{split}
&u_t+uu_x+vu_y+h_x=b_x,\\
&v_t+uv_x+vv_y+h_y=b_y,\\
&h_t+(uh)_x+(vh)_y=0.
\end{split}
\end{equation}
Here
$(u,v)$ is the horizontal fluid velocity averaged over the height of the fluid column,
$h$ is the thickness of a fluid column,
$b=b(x,y)$ is a parameter function that is the bottom topography measured downward with respect to a fixed reference level, 
and the gravitational acceleration is set to be equal 1 in dimensionless units.
In this class,
$(t,x,y)$ is the tuple of the independent variables,
$(u,v,h)$ is the tuple of the dependent variables and
$b$ is considered to be the arbitrary element of the class.
These quantities are graphically represented in Figure~\ref{picture}.
\begin{figure}
\centering
\includegraphics[width=.8\linewidth]{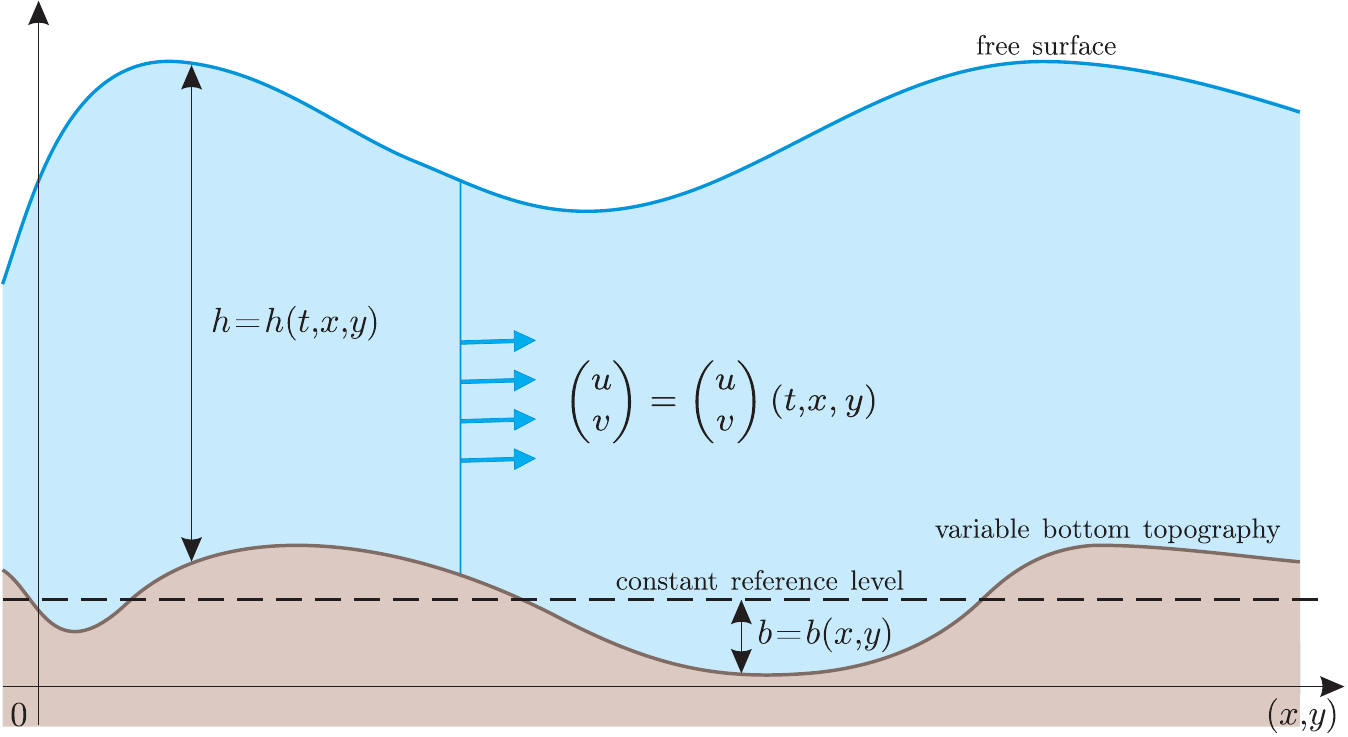}
\caption{The shallow water model.}
\label{picture}
\end{figure}

We classify cases of Lie-symmetry extensions for systems from the class~\eqref{eq:2DSWEs}
up to equivalence generated by the equivalence group~$G^\sim$ of this class.
Then we find additional equivalences among listed $G^\sim$-inequivalent cases of Lie-symmetry extensions.
These additional equivalences are induced by admissible point transformations within the class~\eqref{eq:2DSWEs}
that are not generated jointly by elements of~$G^\sim$ and by point symmetry groups of systems from the class~\eqref{eq:2DSWEs}.

We solve the group classification problem within the framework of the infinitesimal approach
using an optimized version of the \emph{method of furcate splitting}.
This method was suggested in~\cite{niki01a} in the course of group classification
of the class of nonlinear Schr\"odinger equations of the form
${\rm i}\psi_t+\triangle\psi+F(\psi,\psi^*)=0$
with an arbitrary number~$n$ of space variables.
Here, $\psi$ is an unknown complex-valued function of real variables~$(t,x_1,\dots,x_n)$,
and $F$ is an arbitrary sufficiently smooth function of $(\psi,\psi^*)$,
which is the arbitrary element of this class.
Subsequently, the method of furcate splitting was applied
to the group classification of various classes of (1+1)-dimensional variable-coefficient
reaction--convection--diffusion equations,
where arbitrary elements depend on single but possibly different arguments
\cite{ivan10a,ivan06a,opan18b,popo04a,vane12a,vane15c}.
Other classes whose arbitrary elements depend on single arguments were also classified, including 
higher-order Burgers-like equations~\cite{boyk01a},
Gardner equations with time-dependent coefficients~\cite{vane15b} 
and multidimensional nonlinear wave equations~\cite{vasi01a}.
Therefore, the present paper gives only the second example
of solving the group classification problem with the method of furcate splitting
for a class of (systems of) differential equations
with arbitrary elements depending on two arguments.
In the course of applying the method of furcate splitting,
we obtained a set of template-form equations,
which are inhomogeneous first-order quasilinear partial differential equations with respect to the arbitrary element~$b$
with two independent variables~$x$ and~$y$.
Each of these equations is canonically associated with a vector field
in the space with the coordinates $(x,y,b)$.
Optimizing the computation within the method of furcate splitting,
we show that the set of such vector fields is a Lie algebra
with respect to the Lie bracket of vector fields.

The further organization of the paper is as follows.
The equivalence group~$G^\sim$ of the class~\eqref{eq:2DSWEs} is computed in Section~\ref{sec:EquivalenceGroupSWE} 
by the algebraic method suggested in~\cite{hydo1998a,hydo00By}.
Section~\ref{sec:PreliminaryAnalysisSWE} contains the preliminary analysis of determining equations 
for Lie symmetries of systems from the class~\eqref{eq:2DSWEs} 
and the statement of results of the group classification of the class~\eqref{eq:2DSWEs} up to $G^\sim$-equivalence.
The proof of this classification is presented in Section~\ref{sec:ProofClassification}.
As the class~\eqref{eq:2DSWEs} is not semi-normalized, 
additional equivalences between $G^\sim$-equivalent cases of Lie-symmetry extensions have to be studied. 
This is done in Section~\ref{sec:AdditionalEquivalencesSWE} 
via comparing the structure of the corresponding maximal Lie invariance algebras.
In the final Section~\ref{sec:ConclusionsSWE} we summarize the findings of the paper and discuss possible future research directions.

\section{Equivalence group}\label{sec:EquivalenceGroupSWE}

According to the interpretation of $b$ as a varying arbitrary element or as a fixed function,
we will refer to~\eqref{eq:2DSWEs} as to a class of systems of differential equations or to a fixed system.
The complete auxiliary system for the arbitrary element~$b$ of the class~\eqref{eq:2DSWEs} consists of the equations
\begin{gather*}
b_u=b_{u_t}=b_{u_x}=b_{u_y}=0, \quad
b_v=b_{v_t}=b_{v_x}=b_{v_y}=0, \\
b_h=b_{h_t}=b_{h_x}=b_{h_y}=0, \quad
b_t=0. 
\end{gather*}
Note that there are no auxiliary inequalities for the arbitrary element~$b$.

The arbitrary element~$b$ depends only on independent variables.
Therefore, we can treat it as one more dependent variable and consider the extended system
\begin{equation}\label{2dSWE}
\begin{split}
&u_t+uu_x+vu_y+h_x=b_x,\\
&v_t+uv_x+vv_y+h_y=b_y,\\
&h_t+(uh)_x+(vh)_y=0,\\
&b_t=0.
\end{split}
\end{equation}
Here we also use the fact that the arbitrary element~$b$ does not depend on $t$ as well.

Since the arbitrary element~$b$ does not involve derivatives of dependent variables,
the generalized equivalence group~$G^\sim$ of the class~\eqref{eq:2DSWEs}
can be assumed to act in the space with the coordinates $(t,x,y,u,v,h,b)$
and thus to coincide with the point symmetry group~$G$ of the system~\eqref{2dSWE}.
Analogously, the generalized equivalence algebra~$\mathfrak g^{\sim}$ of the class~\eqref{eq:2DSWEs}
can be identified with the maximal Lie invariance algebra~$\mathfrak g$ of the system~\eqref{2dSWE}.
This is why it suffices to find~$\mathfrak g$ and~$G$ instead of~$\mathfrak g^{\sim}$ and~$G^\sim$, respectively.

To construct the group~$G$, we invoke Hydon's automorphism-based version~\cite{hydo1998a,hydo00By} of the algebraic method 
for finding discrete symmetries of systems of differential equations.%
\footnote{%
See also \cite{bihl2015a,bihl11Cy,card12Ay,card2015a,kont2017a} for further development, other versions and extensions of this method.
}
For this, we first need to compute the algebra~$\mathfrak g$,
and the infinitesimal method~\cite{blum89Ay,blum10Ay,olve86Ay,ovsi82A} is relevant here, see~\cite{akha91a}.
The algebra $\mathfrak g$ consists of the infinitesimal generators of one-parameter point symmetry groups of the system~\eqref{2dSWE},
which are vector fields in the space with coordinates $(t,x,y,u,v,h,b)$,
\begin{gather*}
\mathbf v=\tau\p_t+\xi^1\p_x+\xi^2\p_y+\eta^1\p_u+\eta^2\p_v+\eta^3\p_h+\eta^4\p_b,
\end{gather*}
where the components $\tau$, $\xi^1$, $\xi^2$ and $\eta^i$, $i=1,2,3,4$, are smooth functions of these coordinates.
For convenience, hereafter we simultaneously use the notation $(w^1,w^2,w^3,w^4)$ for $(u,v,h,b)$.
The infinitesimal invariance criterion implies that
\begin{equation}\label{cr_11}
\begin{split}
&\mathop{\rm pr}\nolimits^{(1)}\mathbf v(w^1_t+w^1w^1_x+w^2w^1_y+w^3_x-w^4_x)=0,\\
&\mathop{\rm pr}\nolimits^{(1)}\mathbf v(w^2_t+w^1w^2_x+w^2w^2_y+w^3_y-w^4_y)=0,\\
&\mathop{\rm pr}\nolimits^{(1)}\mathbf v(w^3_t+(w^1w^3)_x+(w^2w^3)_y)=0,\\
&\mathop{\rm pr}\nolimits^{(1)}\mathbf v(w^4_t)=0,
\end{split}
\end{equation}
whenever the system~\eqref{2dSWE} holds.
Here $\mathop{\rm pr}\nolimits^{(1)}\mathbf v$ is the first order prolongation of the vector field $\mathbf v$,
\begin{gather*}
\mathop{\rm pr}\nolimits^{(1)}\mathbf v=\mathbf v+\sum_{i=1}^{4}(\eta^{it}\p_{w^i_t}+\eta^{ix}\p_{w^i_x}+\eta^{iy}\p_{w^i_y})
\end{gather*}
with
$
\eta^{it}=\mathrm D_t(\eta^i-\tau w^i_t-\xi^1 w^i_x-\xi^2 w^i_y)+\tau w^i_{tt}+\xi^1 w^i_{tx}+\xi^2 w^i_{ty},
$
and similarly for $\eta^{ix}$ and $\eta^{iy}$; 
$\mathrm D_t$,~$\mathrm D_x$ and $\mathrm D_y$ denote the total derivative operators with respect to~$t$, $x$ and~$y$, 
respectively. 

We substitute the expressions for $w^i_t$, $i=1,\dots 4,$ in view of the system~\eqref{2dSWE} into the expanded equations~\eqref{cr_11} 
and then split them with respect to the derivatives $w^i_x$ and $w^i_y$, $i=1,\dots,4$. 
This procedure results in the system of differential equations on the components $\tau$, $\xi^1$, $\xi^2$ and $\eta^i$, $i=1,\dots,4$, 
of the vector field $\mathbf v$, which are called the determining equations. 
Integrating this system, we derive the explicit form of the vector field components,
\begin{gather*}
\tau=(c_5-c_7)t+c_1,\quad \xi^1=c_5x+c_6y+c_2, \quad \xi^2=-c_6x+c_5y+c_3,\\
\eta^1=c_7u+c_6v,\quad \eta^2=-c_6u+c_7v, \quad \eta^3=2c_7h,\quad \eta^4=2c_7b+c_4,
\end{gather*}
where $c_1,\dots ,c_7$ are arbitrary real constants.

Thus, the maximal Lie invariance algebra $\mathfrak g$ of the system~\eqref{2dSWE} is spanned by the seven vector fields%
\footnote{%
The components of vector fields from $\mathfrak g$
that correspond to the independent variables $(t,x,y)$ and dependent variables $(u,v,h)$ of the system~\eqref{eq:2DSWEs}
do not depend on the arbitrary element $b$.
Interpreting this result in terms of equivalence algebras,
we obtain that the generalized equivalence algebra~$\mathfrak g^{\sim}$ of the class~\eqref{eq:2DSWEs}
coincides with its usual equivalence algebra.
}
\begin{gather*}
P^t=\p_t,\quad P^x=\p_x, \quad P^y=\p_y, \quad P^b=\p_b,\quad
D^1=x\p_x+y\p_y+u\p_u+v\p_v+2h\p_h+2b\p_b,\\
D^2=t\p_t-u\p_u-v\p_v-2h\p_h-2b\p_b,\quad
J=x\p_y-y\p_x+u\p_v-v\p_u.
\end{gather*}
Let us fix the basis $\mathcal B=(P^t,P^x,P^y,P^b,D^1,D^2,J)$ of the Lie algebra~$\mathfrak g$.
Up to anticommutativity of the Lie bracket of vector fields,
the only nonzero commutation relations between the basis elements are
\begin{gather*}
[P^x,D^1]=P^x,\quad
[P^y,D^1]=P^y,\quad
[P^b,D^1]=2P^b,\\
[P^t,D^2]=P^t,\quad
[P^b,D^2]=-2P^b,\quad
[P^x,J]=P^y,\quad
[P^y,J]=-P^x.
\end{gather*}
In other words, the complete list of nonzero structure constants of the Lie algebra~$\mathfrak g$
in the basis~$\mathcal B$ is exhausted, up to permutation of subscripts, by
\begin{gather*}
c^2_{25}=1,\quad
c^3_{35}=1,\quad
c^4_{45}=2,\quad
c^1_{16}=1,\quad
c^4_{46}=-2,\quad
c^3_{27}=1,\quad
c^2_{37}=-1.
\end{gather*}
The general form $A=(a^i_j)_{i,j=1}^7$ of automorphism matrices of the algebra $\mathfrak g$ 
in the basis $\mathcal B$ can be found via solving the system of algebraic equations
\begin{equation}\label{sys_aut}
c^{k'}_{i'j'}a^{i'}_{i}a^{j'}_{j}=c^{k}_{ij}a^{k'}_{k},\quad i,j=1,\dots ,7,
\end{equation}
under the condition $\det A\ne0$.
Here we assume summation over the repeated indices.
As a result, we obtain that the automorphism group $\mathrm{Aut}(\mathfrak g)$ of $\mathfrak g$
can be identified with the matrix group that consists of the matrices of the general form
\begin{gather*}
A=\begin{pmatrix}
        a^1_1 & 0 & 0 & 0 & 0 & a^1_6 & 0\\
        0 & a^2_2 & -\varepsilon a^3_2 & 0 & a^2_5 & 0 & a^2_7  \\
        0 & a^3_2 & \varepsilon a^2_2 & 0 & -\varepsilon a^2_7 & 0 & \varepsilon a^2_5 \\
        0 & 0 & 0 & a^4_4 & -a^4_6 & a^4_6 & 0 \\
        0 & 0 & 0 & 0 & 1 & 0 & 0 \\
        0 & 0 & 0 & 0 & 0 & 1 & 0 \\
        0 & 0 & 0 & 0 & 0 & 0 & \varepsilon
  \end{pmatrix},
\end{gather*}
where $\varepsilon=\pm1$, and the remaining parameters $a^i_j$'s are arbitrary real constants with
\[
a^1_1\big((a^2_2)^2+(a^3_2)^2\big)a^4_4\ne0.
\]

\begin{theorem}\label{dsym}
A complete list of discrete symmetry transformations of the extended system~\eqref{2dSWE} 
that are independent up to combining with each other and with continuous symmetry transformations of this system 
is exhausted by two transformations alternating signs of variables,
\begin{gather}\label{eq:2DSWEsDiscreteEquivTrans}
\begin{split}
&(t,x,y,u,v,h,b)\mapsto (-t,x,y,-u,-v,h,b), \\
&(t,x,y,u,v,h,b)\mapsto (t,x,-y,u,-v,h,b).
\end{split}
\end{gather}
\end{theorem}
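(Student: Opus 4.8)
The plan is to apply the automorphism-based algebraic method of~\cite{hydo1998a,hydo00By}. Its starting point is the observation that every point symmetry transformation~$\Phi$ of the system~\eqref{2dSWE} pushes each basis element of~$\mathcal B$ forward to a linear combination of basis elements and hence induces an automorphism~$\Phi_*$ of the algebra~$\mathfrak g$; this yields a homomorphism from the point symmetry group of~\eqref{2dSWE} into the group~$\mathrm{Aut}(\mathfrak g)$, whose general element has been computed above. First I would single out the subgroup of~$\mathrm{Aut}(\mathfrak g)$ that is realized by the continuous symmetries, that is, by the inner automorphisms $\exp(s\,\mathrm{ad}_{\mathbf v})$ with $\mathbf v\in\mathfrak g$. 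Computing the adjoint action on~$\mathcal B$ shows that $D^1$ and $D^2$ generate exactly the positive scalings $a^1_1>0$ and $a^4_4>0$, that $J$ together with these scalings fills out the whole conformal block of $(P^x,P^y)$ with $\e=1$ and $(a^2_2,a^3_2)$ ranging over all nonzero pairs, and that the four translations $P^t$, $P^x$, $P^y$, $P^b$ produce arbitrary values of the shift parameters $a^1_6$, $a^2_5$, $a^2_7$, $a^4_6$.

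Composing a given candidate discrete symmetry with a suitably chosen continuous symmetry, I may therefore normalize its induced automorphism to the canonical form in which all shift parameters vanish, $(a^2_2,a^3_2)=(1,0)$, and the three surviving parameters are reduced to signs, $a^1_1=\pm1$, $a^4_4=\pm1$ and $\e=\pm1$. This leaves at most eight candidate classes modulo the continuous symmetries, and the task becomes to decide, for each sign pattern $(a^1_1,a^4_4,\e)$, whether the corresponding automorphism is genuinely induced by a point symmetry of~\eqref{2dSWE}. Following Hydon's reconstruction step, I would write out the defining relations $\Phi_*P^t=a^1_1P^t$, $\Phi_*P^b=a^4_4P^b$, $\Phi_*D^1=D^1$, $\Phi_*D^2=D^2$, and the analogous relations for the remaining generators. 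Because $\mathcal B$ contains the four translations and the two independent scalings, these relations form an overdetermined first-order system that forces~$\Phi$ to be affine, fixes its linear part up to the sign data, and reduces the problem to a single compatibility check against the equations~\eqref{2dSWE}.

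I expect this compatibility check to be the heart of the argument. The patterns with $a^1_1=-1$ and with $\e=-1$ are realizable: they are induced, respectively, by the two transformations in~\eqref{eq:2DSWEsDiscreteEquivTrans}, namely time reversal combined with reversal of the velocity field and reflection in~$y$ combined with reversal of the second velocity component, and one checks directly that each maps solutions of~\eqref{2dSWE} to solutions. The pattern with $a^4_4=-1$, by contrast, is \emph{not} realizable. The key structural fact is that $\mathfrak g$ contains no generator $\p_h$ and that $D^1$ and $D^2$ act on $\p_h$ and $\p_b$ with equal weights, so that $h$ and $b$ are locked together under scaling; any point transformation realizing the relevant automorphism must therefore rescale~$h$ and~$b$ by a common factor. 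Substituting such a rescaling into the momentum equations of~\eqref{2dSWE}, where $h$ and $b$ enter only through the gradient combinations $h_x-b_x$ and $h_y-b_y$, and demanding invariance forces this common factor to be positive, whence $a^4_4>0$ for every genuine point symmetry.

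This obstruction removes one binary degree of freedom, cutting the eight candidate classes down to the four with $a^4_4=+1$; these four are realizable, since they are generated by the identity together with the two transformations~\eqref{eq:2DSWEsDiscreteEquivTrans} and their composition, and they form a group isomorphic to $\mathbb Z_2\times\mathbb Z_2$. To finish I would verify that the two listed transformations are independent in the required sense, i.e.\ that their induced automorphisms ($a^1_1=-1$ and $\e=-1$, respectively) lie in distinct nontrivial cosets of the inner-automorphism subgroup isolated in the first step, and that the reconstruction assigns to each realizable automorphism a unique symmetry modulo the connected group, so that no further discrete symmetries hide in the kernel of $\Phi\mapsto\Phi_*$. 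The only genuinely delicate point is the non-realizability of $a^4_4=-1$; the adjoint-action computation and the affine reconstruction are routine once the canonical form is established.
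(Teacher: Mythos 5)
Your proposal follows essentially the same route as the paper: factor out the inner automorphisms to reduce to the eight sign patterns $\mathrm{diag}(\varepsilon',1,\varepsilon,\varepsilon'',1,1,\varepsilon)$, realize the four patterns with $\varepsilon''=+1$ via the two listed involutions and their compositions, and exclude $\varepsilon''=-1$ by checking compatibility with~\eqref{2dSWE} directly. One caveat: the push-forward relations alone do \emph{not} force the transformation to be affine --- they leave four arbitrary functions of $h/(u^2+v^2)$ in the $(u,v,h,b)$-components, and only the subsequent substitution into~\eqref{2dSWE} eliminates these and simultaneously rules out $\varepsilon''=-1$ --- so the compatibility step carries more computational weight than your sketch suggests, but the argument and its conclusion are correct.
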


\begin{proof}
The maximal Lie invariance algebra $\mathfrak g$ of the system~\eqref{2dSWE} is finite-dimensional and nontrivial.
The complete automorphism group $\mathrm{Aut}(\mathfrak g)$ of $\mathfrak g$ is computed above.
It is not much wider than the inner automorphism group $\mathrm{Inn}(\mathfrak g)$ of $\mathfrak g$,
which consists of the linear operators on $\mathfrak g$ with matrices of the form
\begin{gather*}
\begin{pmatrix}
        e^{-\theta_6} & 0 & 0 & 0 & 0 & \theta_1 & 0 \\
        0 &  e^{-\theta_5}\cos\theta_7 & e^{-\theta_5}\sin\theta_7 & 0 & \theta_2 & 0 & -\theta_3 \\
        0 & -e^{-\theta_5}\sin\theta_7 & e^{-\theta_5}\cos\theta_7 & 0 & \theta_3 & 0 &  \theta_2 \\
        0 & 0 & 0 & e^{2\theta_6-2\theta_5} & 2\theta_4 & -2\theta_4 & 0 \\
        0 & 0 & 0 & 0 & 1 & 0 & 0 \\
        0 & 0 & 0 & 0 & 0 & 1 & 0 \\
        0 & 0 & 0 & 0 & 0 & 0 & 1
  \end{pmatrix},
\end{gather*}
where the parameters $\theta_1,$ $\dots$, $\theta_7$ are arbitrary constants.
Continuous point symmetries of the system~\eqref{2dSWE} can be easily found
by composing elements of one-parameter groups generated by basis elements of~$\mathfrak g$.
Moreover, such symmetries constitute the connected component of the identity transformation in the group $G$,
which induces the entire group $\mathrm{Inn}(\mathfrak g)$.
This is why it suffices to look only for discrete symmetry transformations,
and in the course of the related computation within the framework of the algebraic method, one can factor out inner automorphisms.
The quotient group $\mathrm{Aut}(\mathfrak g)/\mathrm{Inn}(\mathfrak g)$ can be identified 
with the matrix group consisting of the diagonal matrices of the form $\mathrm{diag}(\varepsilon',1,\varepsilon,\varepsilon'',1,1,\varepsilon)$,
where $\varepsilon,\varepsilon',\varepsilon''=\pm1$.
Suppose that the push-forward $\mathcal T_*$ of vector fields in the space with the coordinates $(t,x,y,u,v,h,b)$ by a point transformation
\begin{gather*}
\mathcal T\colon \quad (\tilde t,\tilde x,\tilde y,\tilde u,\tilde v,\tilde h,\tilde b)=(T,X,Y,U,V,H,B)(t,x,y,u,v,h,b)
\end{gather*}
generates the automorphism of $\mathfrak g$ with the matrix $\mathrm{diag}(\varepsilon',1,\varepsilon,\varepsilon'',1,1,\varepsilon)$, i.e.,
\begin{gather*}
\mathcal T_*P^t=\varepsilon' \tilde P^t, \quad \mathcal T_*P^x=\tilde P^x, \quad \mathcal T_*P^y=\varepsilon \tilde P^y, \quad
\mathcal T_*P^b=\varepsilon'' \tilde P^b, \\
\mathcal T_*D^1=\tilde D^1, \quad \mathcal T_*D^2= \tilde D^2, \quad \mathcal T_*J=\varepsilon \tilde J.
\end{gather*}
Here tildes over vector fields mean that these vector fields are given in the new coordinates.
The above conditions for $\mathcal T_*$ imply a system of differential equations for the components of $\mathcal T$,
\begin{gather*}
T_t=\varepsilon',\quad T=tT_t, \quad T_x=T_y=T_u=T_v=T_h=T_b=0,\\
X_x=1,\quad X=xX_x,\quad X_t=X_y=X_u=X_v=X_h=X_b=0,\\
Y_y=\varepsilon,\quad Y=yY_y,\quad Y_t=Y_x=Y_u=Y_v=Y_h=Y_b=0,\\
U_t=U_x=U_y=U_b=0,\quad V_t=V_x=V_y=V_b=0,\\
vU_u-uU_v=\varepsilon V, \quad uU_u+vU_v+2hU_h=U,\\
vV_u-uV_v=-\varepsilon U, \quad uV_u+vV_v+2hV_h=V,\\
H_t=H_x=H_y=H_b=0, \quad vH_u-uH_v=0, \quad uH_u+vH_v+2hH_h=2H,\\
B_b=\varepsilon'', \quad B_t=B_x=B_y=0,\quad vB_u-uB_v=0, \\
uB_u+vB_v+2hB_h=2B-2\varepsilon'' b.
\end{gather*}

{\samepage\noindent
The general solution of the system is
\begin{gather*}
T=\varepsilon' t,\quad X=x,\quad Y=\varepsilon y, \\[.5ex]
U=uF_1\left(\frac h{u^2+v^2}\right)+\varepsilon vF_2\left(\frac h{u^2+v^2}\right),\\[.5ex]
V=-uF_2\left(\frac h{u^2+v^2}\right)+\varepsilon vF_1\left(\frac h{u^2+v^2}\right),\\[.5ex]
H=(u^2+v^2)F_3\left(\frac h{u^2+v^2}\right),\quad B=\varepsilon''b+(u^2+v^2)F_4\left(\frac h{u^2+v^2}\right),
\end{gather*}
where $F_1$, $F_2$, $F_3$ and $F_4$ are arbitrary smooth functions of $h/(u^2+v^2)$.

}

We continue the computations within the framework of the direct method in order to complete the system of constraints for $\mathcal T$. 
Using the chain rule, we express all required transformed derivatives 
$\tilde{w}^i_{\tilde{t}}$, $\tilde{w}^i_{\tilde{x}}$, $\tilde{w}^i_{\tilde{y}}$, $i=1,\dots,4$, in terms of the initial coordinates. 
Then, we substitute the obtained expressions into the copy of the system~\eqref{2dSWE} in the new coordinates. 
The expanded system should identically be satisfied by each solution of the system~\eqref{2dSWE}.
This condition implies that
\begin{gather*}
T=\varepsilon' t,\quad X=x, \quad Y=\varepsilon y, \quad U=\varepsilon' u, \quad V=\varepsilon\varepsilon' v, \quad H=h,\quad B=b.
\end{gather*}
Therefore, discrete symmetries of the equation~\eqref{2dSWE} are exhausted, up to combining with continuous symmetries and with each other,
by the two involutions~\eqref{eq:2DSWEsDiscreteEquivTrans},
which are associated with the values $(\varepsilon',\varepsilon)=(-1,1)$ and $(\varepsilon', \varepsilon)=(1,-1)$, respectively.
\end{proof}

\begin{corollary}
The quotient group of the complete point symmetry group $G$ of the extended system~\eqref{2dSWE} 
with respect to its identity component is isomorphic to the group $\mathbb{Z}_2\times\mathbb{Z}_2$.
\end{corollary}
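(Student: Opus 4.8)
The plan is to read the quotient group off directly from Theorem~\ref{dsym}, which was formulated precisely so as to provide a set of generators of $G/G^0$. Write $G^0$ for the identity component of $G$. As noted in the paragraphs preceding the theorem, $G^0$ coincides with the group of continuous point symmetries of~\eqref{2dSWE} and induces the entire inner automorphism group $\mathrm{Inn}(\mathfrak g)$. Denote by $S_1$ and $S_2$ the two involutions listed in~\eqref{eq:2DSWEsDiscreteEquivTrans}, associated respectively with the pairs $(\varepsilon',\varepsilon)=(-1,1)$ and $(\varepsilon',\varepsilon)=(1,-1)$ (and with $\varepsilon''=1$, since the direct-method computation forced $B=b$). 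The assertion of Theorem~\ref{dsym} is exactly that every element of $G$ is a composition of an element of $G^0$ with a product of powers of $S_1$ and $S_2$; hence the cosets $\bar S_1:=S_1G^0$ and $\bar S_2:=S_2G^0$ generate $G/G^0$, and it remains only to identify the group they generate.

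First I would record that each generator is an involution in the quotient: $S_1$ and $S_2$ are manifestly involutive as point transformations, so $\bar S_1^{\,2}=\bar S_2^{\,2}=e$. Moreover neither is trivial in the quotient, because under the identification used in the proof of Theorem~\ref{dsym} the cosets $\bar S_1$ and $\bar S_2$ correspond to the \emph{nonidentity} outer-automorphism classes $\mathrm{diag}(-1,1,1,1,1,1,1)$ and $\mathrm{diag}(1,1,-1,1,1,1,-1)$ of $\mathrm{Aut}(\mathfrak g)/\mathrm{Inn}(\mathfrak g)$; in particular $S_1,S_2\notin G^0$.

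Next I would check that the two generators commute. This follows from a one-line computation: composing the transformations~\eqref{eq:2DSWEsDiscreteEquivTrans} in either order yields the same map $(t,x,y,u,v,h,b)\mapsto(-t,x,-y,-u,v,h,b)$, so $S_1S_2=S_2S_1$ already at the level of point transformations. (Equivalently, commutativity is automatic once one observes that $gG^0\mapsto(\varepsilon',\varepsilon)$ is a homomorphism into the abelian group $\{\pm1\}\times\{\pm1\}$.)

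Finally I would verify that the three nonidentity cosets are pairwise distinct, so that $G/G^0$ has order exactly four rather than collapsing to a smaller group. The product $S_1S_2$ is associated with $(\varepsilon',\varepsilon)=(-1,-1)$, whose outer-automorphism class $\mathrm{diag}(-1,1,-1,1,1,1,-1)$ differs from those of $\bar S_1$, $\bar S_2$ and the identity. Consequently the map $gG^0\mapsto(\varepsilon',\varepsilon)$ is an isomorphism of $G/G^0$ onto $\{\pm1\}\times\{\pm1\}$, giving $G/G^0\cong\mathbb{Z}_2\times\mathbb{Z}_2$. The only step requiring genuine care is this last injectivity claim, namely that no nontrivial product of the involutions lies back in the identity component; this is exactly what the distinctness of the four associated diagonal classes in $\mathrm{Aut}(\mathfrak g)/\mathrm{Inn}(\mathfrak g)$ guarantees.
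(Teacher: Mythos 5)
Your proposal is correct and follows essentially the same route as the paper, which states the corollary as an immediate consequence of Theorem~\ref{dsym}: the two involutions generate $G/G^0$, they commute, and the four resulting cosets are distinguished by their images in $\mathrm{Aut}(\mathfrak g)/\mathrm{Inn}(\mathfrak g)$. Your explicit verification that $S_1S_2=S_2S_1$ and that the nonidentity cosets map to distinct outer-automorphism classes simply fills in the details the paper leaves implicit.
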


The complete point symmetry group $G$ of the extended system~\eqref{2dSWE} 
is generated by one-parameter point transformation groups associated with vector fields 
from the algebra $\mathfrak g$ and two discrete transformations given in Theorem~\ref{dsym}.

\begin{corollary}\label{2d_comp_sym}
The complete point symmetry group $G$ of the extended system~\eqref{2dSWE} consists of the transformations
\begin{gather}\label{eq:2DSWEsEquivTrans}
\begin{split}
&\tilde t=\delta_1t+\delta_2,\quad
 \tilde x=\delta_3x-\varepsilon\delta_4y+\delta_5, \quad
 \tilde y=\delta_4x+\varepsilon\delta_3y+\delta_6,\\
&\tilde u=\frac{\delta_3}{\delta_1}u-\varepsilon\frac{\delta_4}{\delta_1}v,\quad
 \tilde v=\frac{\delta_4}{\delta_1}u+\varepsilon\frac{\delta_3}{\delta_1}v,\quad
 \tilde h=\frac{\delta_3^{\,2}+\delta_4^{\,2}}{\delta_1^{\,2}}h,\quad
 \tilde b=\frac{\delta_3^{\,2}+\delta_4^{\,2}}{\delta_1^{\,2}}b+\delta_7,
\end{split}
\end{gather}
where $\varepsilon=\pm1$ and the parameters $\delta_i,\, i=1,\dots,7,$ are arbitrary constants with $\delta_1(\delta_3^{\ 2}+\delta_4^{\ 2})\ne0$.
\end{corollary}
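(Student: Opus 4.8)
The plan is to build on the statement immediately preceding this corollary, namely that the complete point symmetry group $G$ of the system~\eqref{2dSWE} is generated by the one-parameter subgroups associated with the basis vector fields of~$\mathfrak g$ together with the two discrete involutions of Theorem~\ref{dsym}. Thus it remains only to integrate the corresponding flows and compose them. First I would integrate each basis field. The four translations $P^t,P^x,P^y,P^b$ produce the shifts $\tilde t=t+a_0$, $\tilde x=x+a_1$, $\tilde y=y+a_2$, $\tilde b=b+a_3$. The dilation $D^1$ integrates to the scaling $(\tilde x,\tilde y,\tilde u,\tilde v,\tilde h,\tilde b)=(e^sx,e^sy,e^su,e^sv,e^{2s}h,e^{2s}b)$, the dilation $D^2$ to $(\tilde t,\tilde u,\tilde v,\tilde h,\tilde b)=(e^rt,e^{-r}u,e^{-r}v,e^{-2r}h,e^{-2r}b)$, and the rotation $J$ to a simultaneous rotation by a single angle $\phi$ in the planes $(x,y)$ and $(u,v)$.

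The key structural observation I would use to organize the composition is that the span $T=\langle P^t,P^x,P^y,P^b\rangle$ is an abelian ideal of~$\mathfrak g$ while $L=\langle D^1,D^2,J\rangle$ is an abelian subalgebra, so that $\mathfrak g=T\rtimes L$; correspondingly every element of the identity component of~$G$ factors as a ``linear'' part coming from the flows of~$L$ followed by a translation coming from~$T$. Since $D^1$, $D^2$ and $J$ commute pairwise, the linear part is parametrized by the triple $(s,r,\phi)$. In the $(x,y)$-plane the dilation $e^s$ and the rotation by~$\phi$ combine into one conformal-linear map, which I would encode by setting $\delta_3=e^s\cos\phi$ and $\delta_4=e^s\sin\phi$, so that $\delta_3^{\,2}+\delta_4^{\,2}=e^{2s}>0$; putting $\delta_1=e^r$ then yields the velocity, height and bottom scalings $\delta_3/\delta_1$ and $(\delta_3^{\,2}+\delta_4^{\,2})/\delta_1^{\,2}$ exactly as they appear in~\eqref{eq:2DSWEsEquivTrans}, with the velocities rotating together with the plane. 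The four translation parameters then enter as $\delta_2,\delta_5,\delta_6,\delta_7$. This reproduces~\eqref{eq:2DSWEsEquivTrans} for the identity component, i.e.\ with $\varepsilon=1$ and $\delta_1>0$.

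Finally I would adjoin the two involutions of Theorem~\ref{dsym}. The time-reversal involution $(t,u,v)\mapsto(-t,-u,-v)$ is absorbed by allowing $\delta_1<0$ (equivalently $\delta_1=\pm e^r$), while the reflection $(y,v)\mapsto(-y,-v)$ is precisely the value $\varepsilon=-1$; their mutual products and their compositions with continuous transformations introduce nothing new. The nondegeneracy condition $\delta_1(\delta_3^{\,2}+\delta_4^{\,2})\ne0$ is exactly the requirement that $\delta_1=\pm e^r\ne0$ and $e^{2s}\ne0$, i.e.\ that the transformation be invertible. The main point requiring care is the bookkeeping of the composition: because the flows of~$T$ do not commute with those of~$L$, I must verify that the semidirect-product factorization indeed yields the stated closed form with freely varying parameters $\delta_i$, and that the two discrete generators contribute only the sign factors $\varepsilon=\pm1$ and $\sgn\delta_1$ rather than genuinely new maps.
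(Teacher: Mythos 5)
Your proposal is correct and follows the same route as the paper, which simply notes that $G$ is generated by the one-parameter groups of the basis fields of~$\mathfrak g$ together with the two involutions of Theorem~\ref{dsym} and leaves the composition implicit. Your explicit integration of the flows, the use of the decomposition $\mathfrak g=\langle P^t,P^x,P^y,P^b\rangle\rtimes\langle D^1,D^2,J\rangle$ to organize the composition, and the identification of the discrete generators with $\sgn\delta_1$ and $\varepsilon=-1$ correctly fill in the details the paper omits.
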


Since the generalized equivalence group $G^\sim$ of the class of two-dimensional shallow water equations~\eqref{eq:2DSWEs} 
coincides with the complete point symmetry group $G$ of the system~\eqref{2dSWE}, 
we can rephrase Theorem~\ref{dsym} and Corollary~\ref{2d_comp_sym} in terms of equivalence transformations of the class~\eqref{eq:2DSWEs}.

\begin{theorem}\label{dequiv}
A complete list of discrete equivalence transformations of the class of two-dimen\-sional shallow water equations~\eqref{eq:2DSWEs} 
that are independent up to combining with each other and with continuous equivalence transformations of this class is exhausted
by two involutions~\eqref{eq:2DSWEsDiscreteEquivTrans} alternating signs of variables.
\end{theorem}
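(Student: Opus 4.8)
The plan is to deduce Theorem~\ref{dequiv} directly from Theorem~\ref{dsym} by invoking the identification of the equivalence structure of the class with the symmetry structure of the extended system, rather than by carrying out any new computation. The starting point is the observation, already established at the beginning of Section~\ref{sec:EquivalenceGroupSWE}, that since the arbitrary element~$b$ depends only on the independent variables and involves no derivatives of the dependent variables, the generalized equivalence group~$G^\sim$ of the class~\eqref{eq:2DSWEs} can be realized as the complete point symmetry group~$G$ of the extended system~\eqref{2dSWE} acting in the space with coordinates $(t,x,y,u,v,h,b)$. This identification is an equality of groups acting on the same space, not merely an abstract isomorphism, and it matches continuous transformations to continuous transformations and discrete ones to discrete ones.

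First I would record that, under this identification, the connected component of the identity in~$G^\sim$ coincides with the connected component of the identity in~$G$. Consequently, the notion of a discrete equivalence transformation being ``independent up to combining with continuous equivalence transformations of the class'' translates verbatim into the notion of a discrete symmetry transformation of~\eqref{2dSWE} being independent modulo the identity component of~$G$. The same applies to independence up to combining transformations with each other, since composition is preserved under the identification. Hence a complete list of independent discrete equivalence transformations of the class is precisely the image, under this identification, of a complete list of independent discrete symmetry transformations of the extended system.

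Then I would simply apply Theorem~\ref{dsym}, which furnishes exactly such a complete list, namely the two sign-alternating involutions in~\eqref{eq:2DSWEsDiscreteEquivTrans}. Because the identification acts as the identity on the coordinate tuple $(t,x,y,u,v,h,b)$, these two involutions coincide with their own images, and the assertion of the theorem follows immediately.

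The only point requiring care --- and the place where the genuine content sits --- is the identification $G^\sim=G$ itself, together with its compatibility with the decomposition of the group into its identity component and its discrete part. Once this is in hand, the statement is a direct rephrasing and there is no real obstacle: the substantive classification work has already been performed in the proof of Theorem~\ref{dsym} through the automorphism-based algebraic method, so no further computation is needed here.
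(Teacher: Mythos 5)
Your proposal is correct and matches the paper's own reasoning: the paper derives Theorem~\ref{dequiv} from Theorem~\ref{dsym} precisely by invoking the identification of the generalized equivalence group~$G^\sim$ of the class~\eqref{eq:2DSWEs} with the complete point symmetry group~$G$ of the extended system~\eqref{2dSWE}, so the statement is a rephrasing with no further computation. Your additional remarks on the identification respecting identity components and composition only make explicit what the paper leaves implicit.
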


\begin{theorem}\label{2d_gen_equiv}
The generalized equivalence group~$G^\sim$ of the class of two-dimensional systems of shallow water equations~\eqref{eq:2DSWEs} 
coincides with the usual equivalence group of this class and consists of the transformations of the form~\eqref{eq:2DSWEsEquivTrans}.
\noprint{
\begin{gather*}
\tilde t=\delta_1t+\delta_2,\quad
\tilde x=\delta_3x-\varepsilon\delta_4y+\delta_5, \quad
\tilde y=\delta_4x+\varepsilon\delta_3y+\delta_6,\\
\tilde u=\frac{\delta_3}{\delta_1}u-\varepsilon\frac{\delta_4}{\delta_1}v,\quad
\tilde v=\frac{\delta_4}{\delta_1}u+\varepsilon\frac{\delta_3}{\delta_1}v,\quad
\tilde h=\frac{\delta_3^{\,2}+\delta_4^{\,2}}{\delta_1^{\,2}}h,\quad
\tilde b=\frac{\delta_3^{\,2}+\delta_4^{\,2}}{\delta_1^{\,2}}b+\delta_7,
\end{gather*}
where $\varepsilon=\pm1$ and the parameters $\delta_i,\, i=1,\dots,7,$ are arbitrary constants with $\delta_1(\delta_3^{\ 2}+\delta_4^{\ 2})\ne0$.
}
\end{theorem}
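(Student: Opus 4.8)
The plan is to deduce this theorem directly from the results already established in this section, since the bulk of the computational work has been carried out in proving Theorem~\ref{dsym} and Corollary~\ref{2d_comp_sym}. The starting point is the identification, justified at the beginning of Section~\ref{sec:EquivalenceGroupSWE}, of the generalized equivalence group $G^\sim$ of the class~\eqref{eq:2DSWEs} with the complete point symmetry group $G$ of the extended system~\eqref{2dSWE}. This identification is legitimate because the arbitrary element $b$ involves neither derivatives of the dependent variables nor the variable $t$, so it may be treated as an additional dependent variable and $G^\sim$ may be assumed to act in the space with coordinates $(t,x,y,u,v,h,b)$.

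Granting this identification, I would simply invoke Corollary~\ref{2d_comp_sym}, which gives the explicit form of every element of $G$ as the family of transformations~\eqref{eq:2DSWEsEquivTrans} parameterized by $\varepsilon=\pm1$ and $\delta_1,\dots,\delta_7$ subject to $\delta_1(\delta_3^{\,2}+\delta_4^{\,2})\ne0$. This already yields the second assertion of the theorem, namely that $G^\sim$ consists of the transformations~\eqref{eq:2DSWEsEquivTrans}. It remains to establish the first assertion, that $G^\sim$ coincides with the usual equivalence group. For this I would inspect the structure of~\eqref{eq:2DSWEsEquivTrans}: the components $\tilde t$, $\tilde x$, $\tilde y$, $\tilde u$, $\tilde v$ and $\tilde h$ are expressed solely in terms of $(t,x,y,u,v,h)$ and do not involve the arbitrary element $b$, while only the component $\tilde b$ depends on $b$ (affinely). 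Hence none of the transformations in $G^\sim$ is genuinely generalized, and the generalized equivalence group degenerates to the usual one.

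Since Corollary~\ref{2d_comp_sym} and Theorem~\ref{dsym} are available, there is essentially no obstacle left; the only point requiring care is the distinction between the generalized and the usual equivalence group, which is resolved by the $b$-independence of the transformation of $(t,x,y,u,v,h)$ noted above. I would also remark that this conclusion is fully consistent with the footnote following the computation of $\mathfrak g$, where the same phenomenon is observed at the infinitesimal level: the components of the generating vector fields corresponding to $(t,x,y,u,v,h)$ are free of $b$, so the generalized equivalence algebra $\mathfrak g^\sim$ coincides with the usual equivalence algebra. The present theorem is the group-level counterpart of that observation, and its proof is correspondingly a direct rephrasing of the already-proven description of $G$.
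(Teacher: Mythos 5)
Your proposal is correct and follows exactly the paper's own route: the paper derives Theorem~\ref{2d_gen_equiv} by identifying $G^\sim$ with the complete point symmetry group $G$ of the extended system~\eqref{2dSWE} and then rephrasing Corollary~\ref{2d_comp_sym}, with the coincidence of the generalized and usual equivalence groups following from the $b$-independence of the $(t,x,y,u,v,h)$-components of~\eqref{eq:2DSWEsEquivTrans}, just as you argue. No gaps.
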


\section{Preliminary analysis and classification result}\label{result_section}\label{sec:PreliminaryAnalysisSWE}

Let $\mathcal L_b$ be a system from the class~\eqref{eq:2DSWEs} with a fixed value of the arbitrary element~$b$ 
and suppose that a vector field~$\mathbf v$ of the general form
\begin{gather*}
\begin{split}
\mathbf v={}&\tau(t,x,y,u,v,h)\p_t+\xi^1(t,x,y,u,v,h)\p_x+\xi^2(t,x,y,u,v,h)\p_y\\
          &{}+\eta^1(t,x,y,u,v,h)\p_u+\eta^2(t,x,y,u,v,h)\p_v+\eta^3(t,x,y,u,v,h)\p_h
\end{split}
\end{gather*}
defined in the space with the coordinates $(t,x,y,u,v,h)$ is 
the infinitesimal generator of a one-parameter Lie symmetry group for the system $\mathcal L_b$.
The set of such vector fields is the maximal Lie invariance algebra~$\mathfrak g_b$ of the system $\mathcal L_b$.

The infinitesimal invariance criterion requires that
\begin{gather}\label{infc}
\mathop{\rm pr}\nolimits^{(1)}\mathbf v(\mathcal L_b)\big|_{\mathcal L_b}=0.
\end{gather}
The first prolongation~$\mathop{\rm pr}\nolimits^{(1)}\mathbf v$ of the vector field~$\mathbf v$ is computed
similarly to the previous section.
We expand the condition~\eqref{infc} and confine it on the manifold defined by $\mathcal L_b$ in the corresponding first-order jet space,
assuming the first-order derivatives of the dependent variables $(u,v,h)$ with respect to $t$
as the leading ones and substituting for these derivatives in view of the system~$\mathcal L_b$,
\begin{gather*}
u_t=-uu_x-vu_y-h_x+b_x,\\
v_t=-uv_x-vv_y-h_y+b_y,\\
h_t=-(uh)_x-(vh)_y.
\end{gather*}
Then we split the obtained equations with respect to the first-order parametric derivatives,
which are the first-order derivatives of the dependent variables $(u,v,h)$ with respect to~$x$ and~$y$.
After an additional rearrangement and excluding equations that are differential consequences of the others,
we derive the system of determining equations for the components of the vector field~$\mathbf v$,
\begin{equation}\label{deteq2.2}
\begin{split}
&\tau_x=\tau_y=\tau_u=\tau_v=\tau_h=0,\\[.5ex]
&\xi^1_u=\xi^1_v=\xi^1_h=0,\quad \xi^2_u=\xi^2_v=\xi^2_h=0,\quad \xi^1_x=\xi^2_y,\quad \xi^1_y+\xi^2_x=0,\\[.5ex]
&\eta^1=(\xi^1_x-\tau_t)u+\xi^1_yv+\xi^1_t,\quad
 \eta^2=\xi^2_xu+(\xi^2_y-\tau_t)v+\xi^2_t,\quad
 \eta^3=2(\xi^1_x-\tau_t)h,\\[.5ex]
&\eta^1_t+u\eta^1_x+v\eta^1_y+\eta^3_x+(\eta^1_u-\tau_t)b_x+\eta^1_vb_y=\xi^1b_{xx}+\xi^2b_{xy},\\[.5ex]
&\eta^2_t+u\eta^2_x+v\eta^2_y+\eta^3_y+\eta^2_ub_x+(\eta^2_v-\tau_t)b_y=\xi^1b_{xy}+\xi^2b_{yy},\\[.5ex]
&\eta^3_t+u\eta^3_x+v\eta^3_y+h\eta^1_x+h\eta^2_y=0.
\end{split}
\end{equation}
Integrating the subsystem of the system~\eqref{deteq2.2} that consists of the equations not containing the arbitrary element~$b$,
we get the following form of the components of the vector field~$\mathbf v$:
\begin{equation}\label{dete2.1}
\begin{split}
&\tau=2F^1-c_1t,\\[.5ex]
&\xi^1=F^1_tx+F^0y+F^2,\\[.5ex]
&\xi^2=-F^0x+F^1_ty+F^3,\\[.5ex]
&\eta^1=(-F^1_t+c_1)u+F^0v+F^1_{tt}x+F^2_t,\\[.5ex]
&\eta^2=-F^0u+(-F^1_t+c_1)v+F^1_{tt}y+F^3_t,\\[.5ex]
&\eta^3=2(-F^1_t+c_1)h,
\end{split}
\end{equation}
where $F^i$, $i=1,2,3,4$, are sufficiently smooth functions of $t$, and $c_1$ is a constant.
From the last two equations of the system~\eqref{deteq2.2}, we derive as a differential consequence that $F^0_t=0$.
Thus, $F^0$~is a constant, and we will denote $c_2:=F^0$.
In other words, for any~$b$,
\[
\mathfrak g_b\subset\mathfrak g_\spanindex
:=\langle D(F^1),\,D^{\rm s},\,J,\,P(F^2,F^3)\rangle
 =\langle D(F^1),\,D^{\rm t},\,J,\,P(F^2,F^3)\rangle,
\]
where the parameters~$F^1$, $F^2$ and~$F^3$ run through the set of smooth functions of~$t$,
\begin{gather}\label{eq:2DSWEsRepresentationsForLieSymVFs}
\begin{split}
&D(F^1):=F^1\p_t+\tfrac12F^1_tx\p_x+\tfrac12F^1_ty\p_y-\tfrac12(F^1_tu-F^1_{tt}x)\p_u-\tfrac12(F^1_tv-F^1_{tt}y)\p_v-F^1_th\p_h,\!\\[.5ex]
&D^{\rm s}:=x\p_x+y\p_y+u\p_u+v\p_v+2h\p_h,\quad
 J:=x\p_y-y\p_x+u\p_v-v\p_u,\\[.5ex]
&P(F^2,F^3):=F^2\p_x+F^3\p_y+F^2_t\p_u+F^3_t\p_v,
\end{split}
\end{gather}
and it is convenient to denote 
$D^{\rm t}:=D(t)-\frac12D^{\rm s}=t\p_t-u\p_u-v\p_v-2h\p_h$ 
and sometimes use this vector field in the spanning set~\eqref{eq:2DSWEsRepresentationsForLieSymVFs} 
instead of~$D^{\rm s}$.

Up to antisymmetry, the nonzero commutation relations between the vector fields spanning~$\mathfrak g_\spanindex$ 
are exhausted by the following ones:
\begin{gather*}
[D(F^1),D(\tilde F^1)]=D(F^1\tilde F^1_t-\tilde F^1F^1_t),\\[.5ex]  
[D(F^1),P(F^2,F^3)]=P\big(F^1F^2_t-\tfrac12F^1_tF^2,F^1F^3_t-\tfrac12F^1_tF^3\big),\\[.5ex]
[D^{\rm s},P(F^2,F^3)]=-P(F^2,F^3),\quad 
[J,P(F^2,F^3)]=P(F^3,-F^2).
\end{gather*}
Therefore, the span~$\mathfrak g_\spanindex$ is an (infinite-dimensional) Lie algebra 
with respect to the Lie bracket of vector fields. 

The local one-parameter groups of point transformations generated by the vector fields~\eqref{eq:2DSWEsRepresentationsForLieSymVFs}
respectively consist of the following point transformations, where $\delta$ is the group parameter:
\begin{itemize}\itemsep=.5ex
\item
$\tilde t=T,\ \tilde x=T_t^{1/2}x,\ \tilde y=T_t^{1/2}y,\ \tilde u=T_t^{-1/2}u+\frac12T_{tt}T_t^{-3/2}x,\ \tilde v=T_t^{-1/2}v+\frac12T_{tt}T_t^{-3/2}y$, $\tilde h=T_t^{-1}h$, 
where $T=T(t,\delta):=\hat H\big(H(t)+\delta\big)$ with an antiderivative $H$ of~$1/F^1$ and the inverse $\hat H$ of~$H$ with respect to~$t$; 
\item
$\tilde t=t,\ \tilde x=e^\delta x,\ \tilde y=e^\delta y,\ \tilde u=e^\delta u,\ \tilde v=e^\delta v,\ \tilde h=e^{2\delta}h$;
\item
$\tilde t=t,\ \tilde x=x\cos\delta-y\sin\delta,\ \tilde y=x\sin\delta+y\cos\delta,\ \tilde u=u\cos\delta-v\sin\delta,\ \tilde v=u\sin\delta+v\cos\delta,$ $\tilde h=h$; 
\item
$\tilde t=t,\ \tilde x=x+\delta F^2(t),\ \tilde y=y+\delta F^3(t),\ \tilde u=u+\delta F^2_t(t),\ \tilde v=v+\delta F^3_t(t),\ \tilde h=h$.
\end{itemize}
These are 
arbitrary transformations of~$t$ with simultaneous linear transformations of the other variables with coefficients depending on~$t$, 
including shifts of~$t$ ($F^1=1$), concordant scalings of all variables ($F^1=t$) and time inversions ($F^1=t^2$);
scaling of the space variables~$(x,y)$ with simultaneous scalings of the dependent variables;
concordant rotations in the $(x,y)$- and $(u,v)$-planes; 
and generalized shifts of the space variables depending on~$t$, including their usual shifts ($F^2,F^3=\const$) and Galilean boosts ($F^2/t,F^3/t=\const$).

For elements of~$\mathfrak g_b$, which are of the form $2D(F^1)-c_1D^{\rm t}-c_2J+P(F^2,F^3)$,
the parameters~$F^1$, $F^2$, $F^3$, $c_1$ and~$c_2$ additionally satisfy two equations
implied by the last two equations from~\eqref{deteq2.2},
which explicitly involve the arbitrary element~$b$ and thus are the classifying equations for the class~\eqref{eq:2DSWEs}.
They can be integrated to the single equation
\begin{equation}\label{eq:2DSWEsClassifyingEq}
\begin{split}
&(F^1_tx+c_2y+F^2)b_x+(-c_2x+F^1_ty+F^3)b_y+2(F^1_t-c_1)b\\
&\qquad-F^1_{ttt}\frac{x^2+y^2}2-F^2_{tt}x-F^3_{tt}y-F^4=0,
\end{split}
\end{equation}
where $F^4$ is one more smooth parameter function of $t$.
The equation~\eqref{eq:2DSWEsClassifyingEq} can be considered as the only classifying equation instead of the above ones.
Thus, the group classification problem for the class~\eqref{eq:2DSWEs}
reduces to solving the equation~\eqref{eq:2DSWEsClassifyingEq} up to $G^\sim$-equivalence
with respect to the arbitrary element $b$ and the parameters~$F^1$, \dots, $F^4$, $c_1$ and~$c_2$.

The next theorem presents the results of the group classification of the class~\eqref{eq:2DSWEs} up to $G^\sim$-inequivalence.
In this theorem and in Section~\ref{sec:ProofClassification},
it is convenient to use, simultaneously with $(x,y)$, the polar coordinates~$(r,\varphi)$ on the $(x,y)$-plane,
\begin{gather*}
r:=\sqrt{x^2+y^2},\quad
\varphi:=\arctan\frac yx.
\end{gather*}

\begin{theorem}\label{thm:GroupClassificationOf2DSWEs1}
The kernel Lie invariance algebra of systems from the class~\eqref{eq:2DSWEs}
is $\mathfrak g^{\cap}=\langle D(1)\rangle$.
A~complete list of $G^\sim$-inequivalent Lie-symmetry extensions within the class~\eqref{eq:2DSWEs}
is exhausted by the following cases,
where $f$ denotes an arbitrary smooth function of a single argument,
$\alpha$,~$\beta$, $\mu$ and~$\nu$ are arbitrary constants with $\alpha\geqslant0\bmod G^\sim$, $\beta>0$
and additional constraints indicated in the corresponding cases,
$\varepsilon=\pm1\bmod G^\sim$ and $\delta\in\{0,1\}\bmod G^\sim$.
\end{theorem}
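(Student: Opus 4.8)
The plan is to derive everything from the single classifying equation~\eqref{eq:2DSWEsClassifyingEq}, together with the explicit description~\eqref{eq:2DSWEsEquivTrans} of the equivalence group~$G^\sim$. For the kernel I would require~\eqref{eq:2DSWEsClassifyingEq} to hold for an \emph{arbitrary} value of~$b$. Treating $b$, $b_x$, $b_y$ and the monomials $1,x,y,x^2+y^2$ as functionally independent and splitting accordingly forces $F^1_t=0$, $c_2=0$, $F^2=F^3=0$, $c_1=0$, $F^1_{ttt}=F^2_{tt}=F^3_{tt}=0$ and $F^4=0$. Hence $F^1$ is constant, the only surviving field is $D(1)=\p_t$, and $\mathfrak g^\cap=\langle D(1)\rangle$.

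Next, the furcate-splitting machinery. For a fixed~$b$ the equation~\eqref{eq:2DSWEsClassifyingEq} is linear in $b,b_x,b_y$ with coefficients depending on~$t$ through $F^1_t,F^2,F^3,F^4$, their $t$-derivatives and the constants $c_1,c_2$. Since $b$ is independent of~$t$, freezing $t$ at any value turns~\eqref{eq:2DSWEsClassifyingEq} into an inhomogeneous first-order linear PDE for~$b$ with constant coefficients, a template equation, which I identify with the condition that the graph $b=b(x,y)$ be invariant under the vector field
\[
Q=(\beta^1x+\beta^2y+\beta^3)\p_x+(-\beta^2x+\beta^1y+\beta^4)\p_y+\Bigl(\rho^1\tfrac{x^2+y^2}2+\rho^2x+\rho^3y+\rho^4-2\sigma b\Bigr)\p_b
\]
in the space with coordinates $(x,y,b)$, where $(\beta^1,\beta^2,\beta^3,\beta^4,\sigma,\rho^1,\rho^2,\rho^3,\rho^4)$ are the frozen values of $(F^1_t,c_2,F^2,F^3,F^1_t-c_1,F^1_{ttt},F^2_{tt},F^3_{tt},F^4)$. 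All such~$Q$ sweep out a fixed nine-dimensional Lie algebra~$\mathfrak a$, and the template fields admitted by a given~$b$ form the subalgebra $\mathfrak a_b\subseteq\mathfrak a$ of fields tangent to the graph. The optimization is exactly the observation that $\mathfrak a_b$ is closed under the Lie bracket: once a few template equations are found, their brackets furnish further ones without re-splitting~\eqref{eq:2DSWEsClassifyingEq}.

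The classification itself I would organize by the dimension $k=\dim\mathfrak a_b$, descending from its maximal value. For each~$k$ I would list the $G^\sim$-inequivalent subalgebras $\mathfrak a_b$ of~$\mathfrak a$; here $G^\sim$ acts on~$\mathfrak a$ through the induced action of the similarity transformations of the $(x,y)$-plane coming from~\eqref{eq:2DSWEsEquivTrans} together with the affine scaling and shift of~$b$, which is handled most cleanly in the polar coordinates $(r,\varphi)$. For each admissible subalgebra I would integrate the tangency conditions to recover the corresponding normal form of~$b$ (the constant and parabolic profiles $b=\const$ and $b\propto r^2$ sitting at the top of the hierarchy). Finally, for each such~$b$ I would reconstruct $\mathfrak g_b\subseteq\mathfrak g_\spanindex$ by restoring the $t$-dependence: because $\rho^1=F^1_{ttt}$, $\rho^2=F^2_{tt}$, $\rho^3=F^3_{tt}$, the requirement $Q(t)\in\mathfrak a_b$ for all~$t$ becomes a system of linear ODEs in~$t$, whose solution space is the sought Lie-symmetry extension.

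I expect the main obstacle to lie in the last two steps and to be twofold. First, one must classify, up to the induced $G^\sim$-action, only those subalgebras $\mathfrak a_b$ that are actually realized by the graph of some function~$b$; this realizability constraint discards many abstract subalgebras of~$\mathfrak a$. Second, the $t$-reconstruction is delicate, since the differential relations between the $\beta$'s and $\rho$'s turn the invariance conditions into ODEs whose integration separates cases sharing the same~$k$ (the flat and parabolic bottoms being the prototype: the same five-dimensional $\mathfrak a_b$ but inequivalent reconstructed algebras). Guaranteeing that the final list is simultaneously exhaustive and free of residual $G^\sim$-equivalences is the bookkeeping that the furcate-splitting organization is designed to control.
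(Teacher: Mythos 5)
Your proposal reproduces the paper's own strategy essentially verbatim: the kernel computation by splitting the classifying equation~\eqref{eq:2DSWEsClassifyingEq} with respect to $b$, its derivatives and the monomials in $(x,y)$; the template-form equations obtained by freezing~$t$; the identification of each template equation with tangency of a vector field from the nine-dimensional algebra $\mathfrak a=\langle x\p_x+y\p_y,\,-x\p_y+y\p_x,\,\p_x,\,\p_y,\,b\p_b,\,(x^2+y^2)\p_b,\,x\p_b,\,y\p_b,\,\p_b\rangle$ to the graph of~$b$, with closure under the Lie bracket as the compatibility/optimization device; the stratification by the number $k$ of independent template equations; and the reconstruction of the $t$-dependence via the induced linear ODEs for $F^1,F^2,F^3$. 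This is the correct approach and matches Section~\ref{sec:ProofClassification} of the paper, the only caveat being that you present the plan without executing the case-by-case integration and $G^\sim$-gauging (organized in the paper through the ranks of the submatrices $A_2$ and $A_4$ for each $k\in\{1,\dots,5\}$) that actually produces the explicit list of Cases~\ref{case1}--\ref{case22d}.
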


\begin{enumerate}
\item
\label{case1} $b=r^\nu f(\varphi+\alpha\ln r)$, \ $(\alpha,\nu)\ne(0,-2)$, \ $\nu\ne0$:\quad
$\mathfrak g_b=\big\langle D(1),\,4D(t)-(\nu+2)D^{\rm t}-2\alpha J\big\rangle$;
\item
\label{case2} $b=f(\varphi+\alpha\ln r)+\nu\ln r$, \ $\nu\in\{-1,0,1\}\bmod G^\sim$:\quad
$\mathfrak g_b=\big\langle D(1),\,2D(t)-D^{\rm t}-\alpha J\big\rangle$;
\item
\label{case3}$b=f(r)+\delta\varphi$:\quad
$\mathfrak g_b=\big\langle D(1),\,J\big\rangle$;
\item
\label{case4} $b=f(r)e^{\beta\varphi}$:\quad
$\mathfrak g_b=\big\langle D(1),\,2J-\beta D^{\rm t}\big\rangle$;
\item
\label{case5} $ b=f(y)e^x$:\quad
$\mathfrak g_b=\big\langle D(1),\,D^{\rm t}-P(2,0)\big\rangle$;
\item
\label{case6}
\begin{enumerate}
\item 
\label{case6a} $b=r^{-2}f(\varphi)$:\quad
$\mathfrak g_b=\big\langle D(1),\,D(t),\,D(t^2)\big\rangle$;
\item
\label{case6b} $b=r^{-2}f(\varphi)+\frac12r^2$:\quad
$\mathfrak g_b=\big\langle D(1),\,D(e^{2t}),\,D(e^{-2t})\big\rangle$;
\item
\label{case6c} $b=r^{-2}f(\varphi)-\frac12r^2$:\quad
$\mathfrak g_b=\big\langle D(1),\,D(\cos 2t),\,D(\sin 2t)\big\rangle$;
\end{enumerate}
\item
\label{case7} $b=f(y)+\delta x$:\quad
$\mathfrak g_b=\big\langle D(1),\,P(1,0),\,P(t,0)\big\rangle$;
\item
\label{case8} $b=f(y)+\tfrac12x^2$:\quad
$\mathfrak g_b=\big\langle D(1),\,P(e^t,0),\,P(e^{-t},0)\big\rangle$;
\item
\label{case9} $b=f(y)-\tfrac12x^2$:\quad
$\mathfrak g_b=\big\langle D(1),\,P(\cos t,0),\,P(\sin t,0)\big\rangle$;
\item
\label{case10} $b=\delta\varphi-\nu\ln r$, $\nu=\pm1\bmod G^\sim$ if $\delta=0$:\quad
$\mathfrak g_b=\big\langle D(1),\,2D(t)-D^{\rm t},\,J\big\rangle$;
\item
\label{case11} $b=\varepsilon r^{\nu}e^{\alpha\varphi}$,
$\nu\ne-2$, $(\alpha,\nu)\notin\{(0,0),(0,2)\}$:\ \
$\mathfrak g_b=\big\langle D(1),\,4D(t)-(\nu+2)D^{\rm t},\,2J-\alpha D^{\rm t}\big\rangle$;
\item\label{case12} 
\begin{enumerate}
\item
\label{case12a} $b=\varepsilon r^{-2}e^{\alpha\varphi}$:\quad
$\mathfrak g_b=\big\langle D(1),\,D(t),\,D(t^2),\,\alpha D^{\rm s}+4J\big\rangle$;
\item
\label{case12b} $b=\varepsilon r^{-2}e^{\alpha\varphi}+\frac12r^2$:\quad
$\mathfrak g_b=\big\langle D(1),\,D(e^{2t}),\,D(e^{-2t}),\,\alpha D^{\rm s}+4J\big\rangle$;
\item
\label{case12c} $b=\varepsilon r^{-2}e^{\alpha\varphi}-\frac12r^2$:\quad
$\mathfrak g_b=\big\langle D(1),\,D(\cos 2t),\,D(\sin 2t),\,\alpha D^{\rm s}+4J\big\rangle$;
\end{enumerate}
\item
\label{case13} $b=\varepsilon|y|^{\nu}+\delta x$, $\nu\notin\{-2,0,2\}$:\\
$\mathfrak g_b=\big\langle D(1),\,4D(t)-(\nu+2)D^{\rm t}-\delta(\nu-1)P(t^2,0),\,P(1,0),\,P(t,0)\big\rangle$;
\item
\label{case14} $b=\varepsilon\ln|y|+\delta x$:\quad
$\mathfrak g_b=\big\langle D(1),\,2D(t)-D^{\rm t}-\frac12\delta P(t^2,0)),\,P(1,0),\,P(t,0)\big\rangle;$
\item
\label{case15} $b=\varepsilon e^y+\delta x$:\quad
$\mathfrak g_b=\big\langle D(1),\,D^{\rm t}-P(\delta t^2,2),\,P(1,0),\,P(t,0)\big\rangle;$
\item
\begin{enumerate}
\item
\label{case16a} $b=\varepsilon y^{-2}+\delta x$:\quad
$\mathfrak g_b=\big\langle D(1),\,D(t)+\frac34\delta P(t^2,0),\,D(t^2)+\frac12\delta P(t^3,0),\,P(1,0),\,P(t,0)\big\rangle;$
\item
\label{case16b} $b=\varepsilon y^{-2}+\frac12r^2$:\quad
$\mathfrak g_b=\big\langle D(1),\,D(e^{2t}),\,D(e^{-2t}),\,P(e^t,0),\,P(e^{-t},0)\big\rangle$;
\item
\label{case16c} $b=\varepsilon y^{-2}-\frac12r^2$:\quad
$\mathfrak g_b=\big\langle D(1),\,D(\cos 2t),\,D(\sin 2t),\,P(\cos t,0),\,P(\sin t,0)\big\rangle$;
\end{enumerate}
\item
\label{case17} $b=\tfrac12x^2+\tfrac12\beta^2y^2$, $0<\beta<1$:\quad
$\mathfrak g_b=\big\langle D(1),\,D^{\rm s},\,P(e^t,0),\,P(e^{-t},0),\,P(0,e^{\beta t}),\,P(0,e^{-\beta t})\big\rangle$;
\item
\label{case18} $b=\frac12x^2+\delta y$:\quad
$\mathfrak g_b=\big\langle D(1),\,D^{\rm s}-\tfrac12\delta P(0,t^2),\,P(e^t,0),\,P(e^{-t},0),\,P(0,1),\,P(0,t)\big\rangle$;
\item
\label{case19} $b=\tfrac12x^2-\tfrac12\beta^2y^2$, $\beta>0$:\quad
$\mathfrak g_b=\big\langle D(1),\,D^{\rm s},\,P(e^t,0),\,P(e^{-t},0),\,P(0,\cos\beta t),\,P(0,\sin\beta t)\big\rangle$;
\item
\label{case20} $b=-\frac12x^2+\delta y$:\quad
$\mathfrak g_b=\big\langle D(1),\,D^{\rm s}-\tfrac12\delta P(0,t^2),\,P(\cos t,0),\,P(\sin t,0),\,P(0,1),\,P(0,t)\big\rangle$;
\item
\label{case21} $b=-\tfrac12x^2-\tfrac12\beta^2y^2$, $0<\beta<1$:\\[2pt]
$\mathfrak g_b=\big\langle D(1),\,D^{\rm s},\,P(\cos t,0),\,P(\sin t,0),\,P(0,\cos\beta t),\,P(0,\sin\beta t)\big\rangle$;
\item
\begin{enumerate}
\item
\label{case22a} $b=0$:\quad
$\mathfrak g_b=\big\langle D(1),\,D(t),\,D(t^2),\,D^{\rm s},\,J,\,P(1,0),\,P(t,0),\,P(0,1),\,P(0,t)\big\rangle$;
\item
\label{case22b} $b=x$:\quad
$\mathfrak g_b=\big\langle D(1),\,D(t)+\tfrac34P(t^2,0),\,D(t^2)+\tfrac12P(t^3,0),\,D^{\rm s}-\tfrac12P(t^2,0),\,J-\tfrac12P(0,t^2)$,\\[.5ex]
$\phantom{\mbox{$b=x$:\quad}\mathfrak g_b=\big\langle}P(1,0),\,P(t,0),\,P(0,1),\,P(0,t)\big\rangle$;
\item
\label{case22c} $b=\frac12r^2$:\ \
$\mathfrak g_b=\big\langle D(1),\,D(e^{2t}),\,D(e^{-2t}),\,D^{\rm s},\,J,\,P(e^t,0),\,P(e^{-t},0),\,P(0,e^t),\,P(0,e^{-t})\big\rangle$;\!
\item
\label{case22d} $b=-\frac12r^2$:\quad
$\mathfrak g_b=\big\langle D(1),\,D(\cos2t),\,D(\sin2t),\,D^{\rm s},\,J$,\\[.5ex]
$\phantom{\mbox{$b=-\frac12r^2$:\quad}\mathfrak g_b=\big\langle}P(\cos t,0),\,P(\sin t,0),\,P(0,\cos t),\,P(0,\sin t)\big\rangle$.
\end{enumerate}
\end{enumerate}

\begin{remark}\label{rem:2DSWEsOnMaxLieSymExtensions}
For Cases~\ref{case1}--\ref{case9} to really present maximal Lie-symmetry extensions,
the parameter function~$f$ should take only values
for which the corresponding values of the arbitrary element~$b$
are not $G^\sim$-equivalent to ones from the other listed cases.
\end{remark}

\begin{remark}\label{rem:2DSWEsOnEquivInGroupClassification}
The usage of the $G^\sim$-equivalence as the principal equivalence 
in the course of solving the group classification problem for the class~\eqref{eq:2DSWEs}
is natural from the physical standpoint.
The equivalence group~$G^\sim$ of the class~\eqref{eq:2DSWEs}, which is presented in Theorem~\ref{2d_gen_equiv}, 
is generated by elementary point transformations with the obvious physical interpretation 
as basic coordinate changes, more specifically,   
the shifts, the scalings and the reflections of the time and the space variables, the rotations of the space variables 
and the vertical shifts of the bottom topography. 
(The last transformations constitute a subgroup of~$G^\sim$, 
which is the gauge equivalence group  of the class~\eqref{eq:2DSWEs}, 
see~\cite{popo10Ay} for definitions.)
Roughly speaking, two bottom topographies are $G^\sim$-equivalent 
if they coincide up to simple coordinate changes. 
On the other hand, the $G^\sim$-equivalence is a necessary component of 
the rigorous formulation of the group classification problem for the class~\eqref{eq:2DSWEs} 
within the framework of the classical Lie--Ovsiannikov theory of group analysis of differential equations 
\cite[Chapter~III]{ovsi82A}, which is justified by the fact 
that a point transformation between two systems of differential equations 
induces an isomorphism between their maximal Lie invariance algebras.  
Moreover, the construction and the usage of the equivalence group of the class~\eqref{eq:2DSWEs} 
are much simpler than those of the corresponding equivalence groupoid; 
see Section~\ref{sec:AdditionalEquivalencesSWE} below. 
\end{remark}

\begin{remark}\label{rem:2DSWEsOnPhi} 
Among bottom topographies listed in Theorem~\ref{thm:GroupClassificationOf2DSWEs1}, 
there are a number of those that are not globally defined. 
This fact is a manifestation of the local (in several aspects) nature of point symmetries in general. 
It is especially evident for 
Cases~\ref{case1}--\ref{case4}, \ref{case6} and \ref{case10}--\ref{case12}, 
where the associated bottom topographies are represented in polar coordinates. 
For $2\pi$-periodic values of the parameter function~$f$ in Cases~\ref{case1}, \ref{case2} and~\ref{case6}, 
the corresponding bottom topographies have singularities only at the origin.
In Case~\ref{case3} with $\delta=0$, the domain of~$b$ with a fixed value of the parameter function~$f$ 
merely depends on properties of this value, including its domain and its behavior at the origin. 
At the same time, for other values of~$f$ and~$\delta$, and for the others of the above cases, 
the domains of bottom topographies are necessarily contained in the plane 
cut along the $x$-axis from the origin to~$+\infty$, 
and thus they cannot be preserved by transformations involving rotations 
even though such transformations are among Lie symmetries of these cases, except Case~\ref{case6}.%
\footnote{%
The arbitrary element~$b$ appears in the shallow water equations in the form of its gradient, 
and in contrast to~$b$ itself, the gradient can be defined in rotationally invariant domains 
in Cases~\ref{case3} and \ref{case10} with $\delta\ne0$.
}
Nevertheless, such symmetries can still be used 
for finding (at least local) exact solutions of the shallow water equations with the discussed bottom topographies. 
\end{remark}

\begin{remark}\label{rem:2DSWEsKnownResultsOnLieSyms}
The maximal Lie invariance algebra~$\mathfrak g_0$ of the system~\eqref{eq:2DSWEs} with $b=0$ (Case~\ref{case22a})
was given in \cite[Section~5.4]{ibra85A} 
via observing that this system coincides, after re\nobreakdash-interpreting the meaning of the dependent variables, 
with the system of equations for two-dimensional isotropic gas flows with adiabatic exponent two, 
and the maximal Lie invariance algebra of the latter system had been known due to Ovsiannikov \cite[Section~11]{ovsi82A}. 
Cases~\ref{case21} and~\ref{case22d} exhaust, up to $G^\sim$-equivalence, 
Lie-symmetry extensions for shallow liquids in elliptic paraboloidal basins, 
where $b=-c_1x^2-c_2y^2$ with positive constants~$c_1$ and~$c_2$. 
Similar Lie-symmetry extensions for the rotating reference frame were computed in~\cite{levi89By}. 
For the subclass of systems~$\mathcal L_b$ with rotationally invariant values of the arbitrary element~$b$, $b=b(r)$, 
a complete list of Lie-symmetry extensions inequivalent with respect to the equivalence group 
of this subclass is exhausted by Cases~\ref{case3}$_{\delta=0}$, \ref{case10}$_{\delta=0}$,
\ref{case11}$_{\alpha=0}$, \ref{case12}$_{\alpha=0}$, \ref{case22a}, \ref{case22b} and \ref{case22c}. 
Although this subclass was considered in~\cite{tito89a}, Case~\ref{case10}$_{\delta=0}$ was missed there.
\end{remark}

\begin{corollary}
The dimension of the maximal Lie invariance algebra of any system from the class~\eqref{eq:2DSWEs} is not greater than nine.
More specifically, $\dim \mathfrak g_b\in\{1,2,3,4,5,6,9\}$ for any $b=b(x,y)$.
We also have $\bigcup_b\mathfrak g_b\subsetneq\mathfrak g_\spanindex$.
\end{corollary}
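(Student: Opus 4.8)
The plan is to read both claims off the group-classification list of Theorem~\ref{thm:GroupClassificationOf2DSWEs1}, combined with the fact that a point transformation between two systems of the class induces an isomorphism of their maximal Lie invariance algebras, so that $\dim\mathfrak g_b$ is a $G^\sim$-invariant. For the dimension assertions it therefore suffices to inspect the representatives: Cases~\ref{case1}--\ref{case5} give $\dim\mathfrak g_b=2$, Cases~\ref{case6}--\ref{case11} give $3$, Cases~\ref{case12}--\ref{case15} give $4$, Cases~\ref{case16a}--\ref{case16c} give $5$, Cases~\ref{case17}--\ref{case21} give $6$, and Cases~\ref{case22a}--\ref{case22d} give $9$, while for a generic value of the arbitrary element $\mathfrak g_b$ reduces to the kernel $\mathfrak g^{\cap}=\langle D(1)\rangle$ of dimension $1$. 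These values exhaust all possibilities, whence $\dim\mathfrak g_b\in\{1,2,3,4,5,6,9\}$ with maximum $9$; note in particular that $7$ and $8$ are skipped, the dimension jumping from $6$ straight to $9$.

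For the proper inclusion $\bigcup_b\mathfrak g_b\subsetneq\mathfrak g_\spanindex$, recall that $\mathfrak g_b\subseteq\mathfrak g_\spanindex$ for every $b$ and that $\mathfrak g_\spanindex$ is infinite-dimensional, so it suffices to display one element of $\mathfrak g_\spanindex$ belonging to no $\mathfrak g_b$. I would take $D(t^3)$. If $D(t^3)\in\mathfrak g_b$ for some $b$, then writing it in the general form $2D(F^1)-c_1D^{\rm t}-c_2J+P(F^2,F^3)$ of an element of $\mathfrak g_b$ and matching the coefficients of $\p_t$, $\p_x$ and $\p_y$ forces $F^1=\tfrac12t^3$, $c_1=c_2=0$ and $F^2=F^3=0$. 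Feeding these data into the classifying equation~\eqref{eq:2DSWEsClassifyingEq} reduces it to
\[
\tfrac32t^2\big(xb_x+yb_y+2b\big)-\tfrac32(x^2+y^2)=F^4(t),
\]
whose right-hand side depends on $t$ alone. Applying $\p_x$ and then setting $t=0$ yields $-3x=0$ for all $x$, a contradiction. Hence $D(t^3)$ is a Lie symmetry of no system of the class, so $D(t^3)\in\mathfrak g_\spanindex\setminus\bigcup_b\mathfrak g_b$ and the inclusion is strict.

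The dimension bookkeeping is routine once $G^\sim$-invariance of the dimension is invoked, so the only step requiring genuine care is the choice of witness for the proper inclusion: one must ensure that no combination of $D^{\rm t}$, $J$ and the $P$-generators with a $D(F^1)$-term can reproduce the candidate. The choice $D(t^3)$ works because the functions $F^1$ entering the $D(F^1)$-generators throughout the whole list span only $\langle 1,t,t^2,e^{2t},e^{-2t},\cos2t,\sin2t\rangle$, the solution spaces of $F^1_{ttt}=\kappa F^1_t$ for $\kappa\in\{0,4,-4\}$ tied to the coefficient of $\tfrac12r^2$ in~$b$, and $t^3$ lies in none of them. This is the one point I would double-check carefully.
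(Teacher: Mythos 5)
Your proposal is correct and follows the route the paper intends: the corollary is stated without proof because the dimension claims are read directly off the classification list of Theorem~\ref{thm:GroupClassificationOf2DSWEs1} (whose dimension bookkeeping the paper itself repeats at the start of Section~\ref{sec:AdditionalEquivalencesSWE}), together with the $G^\sim$-invariance of $\dim\mathfrak g_b$. The strict inclusion is the only part left implicit in the paper, and your witness $D(t^3)$ works: matching coefficients does force $F^1=\tfrac12t^3$, $c_1=c_2=0$, $F^2=F^3=0$, and then setting $t=0$ in the reduced classifying equation already gives $-\tfrac32(x^2+y^2)=F^4(0)$, a contradiction.
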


\begin{corollary}
A system from the class~\eqref{eq:2DSWEs} is invariant with respect to a six-dimen\-sional Lie algebra
if and only if the corresponding value of the arbitrary element~$b$ is at most a quadratic polynomial in~$(x,y)$.
\end{corollary}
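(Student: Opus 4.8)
The plan is to read the statement off the complete group classification in Theorem~\ref{thm:GroupClassificationOf2DSWEs1}, once I verify that the two compared properties are both $G^\sim$-invariant. First I would note that the dimension of the maximal Lie invariance algebra~$\mathfrak g_b$ is preserved by every point transformation, since such a transformation induces an isomorphism of the corresponding maximal Lie invariance algebras; in particular $\dim\mathfrak g_b$ is constant along $G^\sim$-orbits. Second, by Theorem~\ref{2d_gen_equiv} every element of~$G^\sim$ acts on~$(x,y)$ by an invertible affine map and replaces~$b$ by $\frac{\delta_3^{\,2}+\delta_4^{\,2}}{\delta_1^{\,2}}b+\delta_7$, see~\eqref{eq:2DSWEsEquivTrans}; hence it carries polynomials of degree at most two in~$(x,y)$ to polynomials of degree at most two, and conversely. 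Both properties being $G^\sim$-invariant, it suffices to verify the asserted equivalence on the representatives listed in Theorem~\ref{thm:GroupClassificationOf2DSWEs1}, where I read ``invariant with respect to a six-dimensional Lie algebra'' as ``$\mathfrak g_b$ possesses a six-dimensional subalgebra'', i.e.\ $\dim\mathfrak g_b\geqslant6$.

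For the implication ``$\dim\mathfrak g_b\geqslant6\Rightarrow b$ at most quadratic'' I would scan the dimensions through the list: Cases~\ref{case1}--\ref{case5} yield $\dim\mathfrak g_b=2$, Cases~\ref{case6}--\ref{case11} yield~$3$, Cases~\ref{case12}--\ref{case15} yield~$4$, Cases~\ref{case16a}--\ref{case16c} yield~$5$, Cases~\ref{case17}--\ref{case21} yield~$6$, and Cases~\ref{case22a}--\ref{case22d} yield~$9$. Thus the only representatives with $\dim\mathfrak g_b\geqslant6$ are those of Cases~\ref{case17}--\ref{case22d}, and each of them, namely $\pm\tfrac12x^2\pm\tfrac12\beta^2y^2$, $\pm\tfrac12x^2+\delta y$, $0$, $x$ and $\pm\tfrac12r^2$, is a polynomial of degree at most two in~$(x,y)$. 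By the $G^\sim$-invariance from the previous step, every~$b$ with $\dim\mathfrak g_b\geqslant6$ is therefore at most quadratic. For the dimension-nine Cases~\ref{case22a}--\ref{case22d} I would additionally exhibit an explicit six-dimensional subalgebra to justify the chosen reading, for instance $\langle D(1),D^{\rm s},P(1,0),P(t,0),P(0,1),P(0,t)\rangle$ in Case~\ref{case22a}.

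For the converse I would show that an at-most-quadratic~$b$ can only be $G^\sim$-equivalent to a representative of Cases~\ref{case17}--\ref{case22d}. Such a~$b$ is $G^\sim$-equivalent to some representative of the list, and by invariance of degree that representative is again at most quadratic. The representatives of Cases~\ref{case10}--\ref{case16c} are patently not polynomials of degree at most two: they involve $\ln r$, $\varphi$, the factor $r^\nu e^{\alpha\varphi}$ restricted to parameter values that rule out its only quadratic specialization, or the terms $|y|^\nu$ with $\nu\neq2$, $\ln|y|$, $e^y$, $y^{-2}$ and $r^{-2}e^{\alpha\varphi}$. This leaves only Cases~\ref{case1}--\ref{case9}, and here lies the main obstacle: for special values of the parameter function~$f$ these forms can accidentally become quadratic (for example $b=\tfrac12r^2$ formally fits Case~\ref{case3} with $\delta=0$). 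The resolution is Remark~\ref{rem:2DSWEsOnMaxLieSymExtensions}: precisely such values of~$f$ are excluded from the genuine Cases~\ref{case1}--\ref{case9}, since for them~$b$ is $G^\sim$-equivalent to a case with a larger symmetry algebra, i.e.\ to one of Cases~\ref{case17}--\ref{case22d}. Hence every at-most-quadratic~$b$ falls into Cases~\ref{case17}--\ref{case22d}, so $\dim\mathfrak g_b\in\{6,9\}$ and in particular $\dim\mathfrak g_b\geqslant6$, which completes the argument.
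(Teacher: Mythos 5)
Your overall strategy---check that both properties are $G^\sim$-invariant and then read the equivalence off the classification list---is exactly how the paper intends this corollary to be obtained (it is stated without a separate proof; the paper's own version of the key fact is the remark in Section~\ref{sec:ProofClassification} that $k>2$ if and only if $b$ is at most quadratic, with $k=3$ and $k=5$ corresponding to $\dim\mathfrak g_b=6$ and $\dim\mathfrak g_b=9$). Your forward direction is complete, and your reading of ``invariant with respect to a six-dimensional Lie algebra'' as $\dim\mathfrak g_b\geqslant6$, together with the explicit six-dimensional subalgebra exhibited for Case~\ref{case22a}, is the right one.

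There is, however, one unjustified step in your converse. You write that an at-most-quadratic $b$ ``is $G^\sim$-equivalent to some representative of the list''. Theorem~\ref{thm:GroupClassificationOf2DSWEs1} only lists the values of $b$ for which $\mathfrak g_b\supsetneq\mathfrak g^\cap$; a value of $b$ with no Lie-symmetry extension is equivalent to none of the listed representatives. So to place a quadratic $b$ into the list you must first know it admits an extension, which is uncomfortably close to what you are trying to prove. The gap is easy to close, and doing so also lets you bypass the delicate discussion of Cases~\ref{case1}--\ref{case16c} and of Remark~\ref{rem:2DSWEsOnMaxLieSymExtensions} entirely: given $b=\alpha x^2+\beta xy+\gamma y^2+\mu x+\nu y+c$, a rotation from~\eqref{eq:2DSWEsEquivTrans} removes the cross term, shifts of $x$, $y$ and $b$ remove the constant and (where the corresponding quadratic coefficient is nonzero) the linear terms, and the residual scalings normalize the remaining coefficients; this lands directly on one of Cases~\ref{case17}--\ref{case22d}, each of which has $\dim\mathfrak g_b\in\{6,9\}$. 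Alternatively, one can quote the observation from Section~\ref{sec:ProofClassification} that any at-most-quadratic $b$ satisfies at least three independent template-form equations, so $k\geqslant3$ and hence $b$ certainly possesses a Lie-symmetry extension and appears in the list.
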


\section{Proof of the classification}\label{sec:ProofClassification}

According to the method of furcate splitting,
we fix an arbitrary value of the variable $t$ in the classifying equation~\eqref{eq:2DSWEsClassifyingEq}
and obtain the following template form of equations for the arbitrary element $b$:
\begin{equation}\label{eq:TemplateForm}
\begin{split}
&a_1(xb_x+yb_y)+a_2(yb_x-xb_y)+a_3b_x+a_4b_y+a_5b\\
&\qquad{}+a_6\frac{x^2+y^2}2+a_7x+a_8y+a_9=0,
\end{split}
\end{equation}
where $a_1,\dots, a_9$ are constants.
For each value of the arbitrary element~$b$,
we denote by $k=k(b)$ the maximal number of template-form equations
with linearly independent coefficient tuples $\bar a^i=(a^i_1,\dots, a^i_9)$, $i=1,\dots,k$,
that are satisfied by this value of~$b$.
It is obvious that $0\leqslant k\leqslant9$.
Moreover, if $k>0$, then for the system of template-form equations
\begin{equation}\label{eq:2DSWEsSysOfTemplateFormEqs}
\begin{split}
&a^i_1(xb_x+yb_y)+a^i_2(yb_x-xb_y)+a^i_3b_x+a^i_4b_y+a^i_5b\\
&\qquad{}+a^i_6\frac{x^2+y^2}2+a^i_7x+a^i_8y+a^i_9=0,\quad i=1,\dots,k,
\end{split}
\end{equation}
with $\rank A=k$ to be consistent with respect to~$b$,
it is required that $k\leqslant5$.
Here
\[
A:=(a^i_j)_{j=1,\dots,9}^{i=1,\dots,k}\,, \quad A_l:=(a^i_j)_{j=1,\dots,l}^{i=1,\dots,k}\,,\quad 1\leqslant l\leqslant9, 
\]
are the matrix of coefficients of the system~\eqref{eq:2DSWEsSysOfTemplateFormEqs} and
its submatrix constituted by the first $l$ columns of~$A$, respectively, and thus $A_9=A$.
We also have $\rank A_5=\rank A=k$, and, if $k<5$, $\rank A_4=k$ as well.
Indeed, if the last condition is not satisfied,
the system~\eqref{eq:2DSWEsSysOfTemplateFormEqs} has an algebraic consequence of the form
$b=R(x,y):=\beta_3(x^2+y^2)+\beta_{11}x+\beta_{12}y+\beta_0$,
where $\beta_0$, $\beta_{11}$, $\beta_{12}$ and~$\beta_3$ are constants,
and such values of $b$ satisfy five independent template-form equations (see the case $k=5$ below),
$xb_x+yb_y=xR_x+yR_y$, $yb_x-xb_y=yR_x-xR_y$, $b_x=R_x$, $b_y=R_y$, and $b=R$.

To check the consistency of the system~\eqref{eq:2DSWEsSysOfTemplateFormEqs} with $k>1$,
to the $i$th equation of this system for each $i=1,\dots,k$
we relate the vector field
\begin{equation}\label{eq:AssociatedVectorField}
\begin{split}
\mathbf v_i={}&\left(a^i_1x+a^i_2y+a^i_3\right)\p_x+\left(a^i_1y-a^i_2x+a^i_4\right)\p_y\\
&{}-\left(a^i_5b+\tfrac12a^i_6(x^2+y^2)+a^i_7x+a^i_8y+a^i_9\right)\p_b.
\end{split}
\end{equation}
Note that $\mathbf v_1,\dots,\mathbf v_k\in\mathfrak a$, where
\[
\mathfrak a:=\big\langle x\p_x+y\p_y,\,-x\p_y+y\p_x,\,\p_x,\,\p_y,\,b\p_b,\,(x^2+y^2)\p_b,\,x\p_b,\,y\p_b,\,\p_b\big\rangle.
\]
The span $\mathfrak a$ is closed with respect to the Lie bracket of vector fields,
i.e., it is a Lie algebra, and thus $[\mathbf v_i,\mathbf v_{i'}]\in\mathfrak a$, $i,i'=1,\dots,k$.
More specifically,
\begin{gather*}
[\mathbf v_i,\mathbf v_{i'}]\in[\mathfrak a,\mathfrak a]=\big\langle \p_x,\,\p_y,(x^2+y^2)\p_b,\,x\p_b,\,y\p_b,\,\p_b\big\rangle\subset\mathfrak a, \quad i,i'=1,\dots,k.
\end{gather*}
In other words, the equation on $b$ that is associated with $[\mathbf v_i,\mathbf v_{i'}]$
is a differential consequence of the system~\eqref{eq:2DSWEsSysOfTemplateFormEqs} 
and has the same template form~\eqref{eq:TemplateForm}.
By its definition, the number~$k=k(b)$ is equal to the maximal number of linearly independent vector fields
associated with template-form equations for the corresponding value of the arbitrary element~$b$.
Therefore,
\begin{gather}\label{eq:CompatibilityCondInTermsOfAssociatedVFs}
[\mathbf v_i,\mathbf v_{i'}]\in\big\langle\mathbf v_1,\dots,\mathbf v_k\big\rangle, \quad i,i'=1,\dots,k.
\end{gather}
We can also use the counterpart of the condition~\eqref{eq:CompatibilityCondInTermsOfAssociatedVFs}
for the projections $\hat{\mathbf v}_1$, \dots, $\hat{\mathbf v}_k$ of the vector fields $\mathbf v_1$, \dots, $\mathbf v_k$
to the space with the coordinates $(x,y)$,
\begin{gather}\label{eq:CompatibilityCondInTermsOfProjectionsOfAssociatedVFs}
[\hat{\mathbf v}_i,\hat{\mathbf v}_{i'}]\in\big\langle\hat{\mathbf v}_1,\dots,\hat{\mathbf v}_k\big\rangle, \quad i,i'=1,\dots,k.
\end{gather}

To simplify the computation, we can gauge coefficients of the system~\eqref{eq:2DSWEsSysOfTemplateFormEqs}
by linearly combining its equations and using transformations from $G^\sim$.
In particular, we can set $a^i_j=1$,
dividing the entire $i$th equation of~\eqref{eq:2DSWEsSysOfTemplateFormEqs} by~$a^i_j$ if $a^i_j\ne0$.
In the case $(a^i_1,a^i_2)\ne(0,0)$, we can make $a^i_3=a^i_4=0$
with point equivalence transformations of simultaneous shifts with respect to~$x$ and~$y$.
Another possibility is to use these shifts for setting $a^i_7=a^i_8=0$ if $a^i_6\ne0$.
Similarly, if $a^i_5\ne0$, then we can shift~$b$ to set $a^i_9=0$.

The case with $k=0$ corresponds to the kernel Lie invariance algebra~$\mathfrak g^\cap$
of systems from the class~\eqref{eq:2DSWEs},
which is also the Lie invariance algebra for a general value of~$b$.
For elements of~$\mathfrak g^\cap$,
the classifying equation~\eqref{eq:2DSWEsClassifyingEq} is identically satisfied by~$b$.
Thus, we can successively split it with respect to~$b$ and its derivatives
and with respect to~$x$ and~$y$ to obtain $F^1_t=F^2=F^3=0$ and $c_2=c_1=0$,
i.e.,
$\tau=\const$, $\xi^1=\xi^2=\eta^1=\eta^2=\eta^3=0$.
In other words,
the algebra~$\mathfrak g^\cap$ is one-dimensional and spanned by the only basis element $\p_t$,
\[
\mathfrak g^\cap=\langle\p_t\rangle.
\]

In the next sections, we separately consider the cases $k=1$, \dots, $k=5$.
For each of these cases,
we make the following steps, splitting the consideration into subcases
depending on values of the parameters~$a^i_1$, \dots, $a^i_9$, $i=1,\dots,k$:
\begin{itemize}\itemsep=0ex
\item
find the values of the parameters~$a$'s
for which the corresponding system~\eqref{eq:2DSWEsSysOfTemplateFormEqs}
is compatible and follow from the equation~\eqref{eq:2DSWEsClassifyingEq},
\item
gauge, if possible, some of the parameters~$a$'s
by recombining template-form equations and by transformations from the group~$G^\sim$
and re-denote the remaining parameters~$a$'s,
\item
integrate the system~\eqref{eq:2DSWEsSysOfTemplateFormEqs}
with respect to the arbitrary element~$b$,
\item
gauge, if possible, the integration constant by transformations from the group~$G^\sim$, and
\item
via solving the system of determining equations with respect to the parameters
$c_1$, $c_2$, $F^1$, $F^2$ and $F^3$,
construct the maximal Lie invariance algebras $\mathfrak g_b$
of systems from the class~\eqref{eq:2DSWEs} with the obtained values of~$b$.
\end{itemize}
The order of the steps can vary, and some steps can intertwine.

\subsection{One independent template-form equation}

In the case $k=1$, the right-hand side of the equation~\eqref{eq:2DSWEsClassifyingEq}
is proportional to the right-hand side of the single equation~\eqref{eq:2DSWEsSysOfTemplateFormEqs}
with the proportionality coefficient~$\lambda$ that is a sufficiently smooth function of~$t$,
\begin{gather*}
F^1_t(xb_x+yb_y)+c_2(yb_x-xb_y)+F^2b_x+F^3b_y+2(F^1_t-c_1)b\\
\qquad{}-F^1_{ttt}\frac{x^2+y^2}2-F^2_{tt}x-F^3_{tt}y-F^4\\
=\lambda\bigg(a^1_1(xb_x+yb_y)+a^1_2(yb_x-xb_y)+a^1_3b_x+a^1_4b_y+a^1_5b\\
\qquad{}+a^1_6\frac{x^2+y^2}2+a^1_7x+a^1_8y+a^1_9\bigg).
\end{gather*}
The function~$\lambda$ does not vanish for any vector field
from the complement of~$\mathfrak g^\cap$ in~$\mathfrak g_b$.
We can split the last equation with respect to derivatives of $b$, including~$b$ itself, and the independent variables~$x$ and~$y$.
As a result, we obtain the system
\begin{equation}\label{k1}
\begin{split}
&F^1_t=a^1_1\lambda,\quad c_2=a^1_2\lambda,\quad F^2=a^1_3\lambda,\quad F^3=a^1_4\lambda,\quad 2(F^1_t-c_1)=a^1_5\lambda,\\[1ex]
&F^1_{ttt}=-a^1_6\lambda,\quad F^2_{tt}=-a^1_7\lambda,\quad F^3_{tt}=-a^1_8\lambda,\quad F^4=-a^1_9\lambda.
\end{split}
\end{equation}
The condition $\rank A_4=k=1$ means here that $(a^1_1,a^1_2,a^1_3,a^1_4)\ne(0,0,0,0)$.
Therefore, the further consideration splits into three cases,
\[
a^1_1\ne0;\quad a^1_1=0,\ a^1_2\ne0;\quad a^1_1=a^1_2=0,\ (a^1_3,a^1_4)\ne(0,0).
\]

\noindent$\boldsymbol{a^1_1\ne0.}$
We can set $a^1_1=1$ by rescaling the entire equation~\eqref{eq:2DSWEsSysOfTemplateFormEqs}.
To simplify the computation we gauge other coefficients of~\eqref{eq:2DSWEsSysOfTemplateFormEqs} by transformations from $G^\sim$.
Thus, we can make $a^1_3=a^1_4=0$ with point equivalence transformations of simultaneous shifts with respect to $x$ and $y$.
Consequently, we have that $F^2=0$ and $F^3=0$.
Then the system~\eqref{k1} implies that $a^1_7=a^1_8=0$.
The equation~\eqref{eq:2DSWEsSysOfTemplateFormEqs} reduces in the polar coordinates~$(r,\varphi)$ to the form
\[
rb_r-a^1_2b_\varphi+a^1_5b+\frac12a^1_6r^2+a^1_9=0.
\]
The integration of the above equation depends on the values of the parameters $a^1_j$, $j=2,5,6,9$.

If $(a^1_2,a^1_5)=(0,2)$, then we can set $a^1_9=0$ by shifts with respect to $b$
and $a^1_6\in\{0,-4,4\}$ by scaling equivalence transformations,
which leads to Cases \ref{case6a}, \ref{case6b} and \ref{case6c}
of Theorem~\ref{thm:GroupClassificationOf2DSWEs1}, respectively.

If $(a^1_2,a^1_5)\ne(0,2)$, in view of the system~\eqref{k1}
we get that $c_1=(1-a^1_5/2)\lambda$, $c_2=a^1_2\lambda$.
Thus, $\lambda$ is a constant, which yields that $F^1_t$ is also constant, and therefore $a^1_6=0$.
Depending on whether $a^1_5\ne0$ or $a^1_5=0$, we get Cases~\ref{case1} and~\ref{case2}, respectively.
In the former case, we additionally set $a^1_9=0$ by shifts with respect to $b$.

\medskip\par\noindent$\boldsymbol{a^1_1=0,\, a^1_2\ne0.}$
Rescaling the equation~\eqref{eq:2DSWEsSysOfTemplateFormEqs} and using shifts with respect to $x$ and $y$,
we can set $a^1_2=1$ and $a^1_3=a^1_4=0$.
In view of the system~\eqref{k1}, the above conditions for~$a$'s imply
$F^1_t=0$, $F^2=F^3=0$, $\lambda=c_2$, and thus $a^1_6=a^1_7=a^1_8=0$.
In the polar coordinates~$(r,\varphi)$,
the equation~\eqref{eq:2DSWEsSysOfTemplateFormEqs} takes the form
$b_\varphi=a^1_5b+a^1_9$.
Integrating this equation separately for $a^1_5=0$ and for $a^1_5\ne0$
we get Cases~\ref{case3} and~\ref{case4}, respectively.
Under the former condition, we additionally set
$a^1_9=-1$ by equivalence transformations of scalings and alternating the signs of $(y,v)$ if $a^1_9\ne0$.
We also can set $a^1_9=0$ by shifts with respect to $b$ if $a^1_5\ne0$.

\par\medskip\par\noindent $\boldsymbol{a^1_1=a^1_2=0,\, (a^1_3,a^1_4)\ne(0,0).}$
Due to rotation equivalence transformations
and the possibility of scaling the entire equation~\eqref{eq:2DSWEsSysOfTemplateFormEqs},
we can rotate and scale the vector $(a^1_3,a^1_4)$ to set $a^1_3=1$ and $a^1_4=0$.
From the system~\eqref{k1}, we derive that $F^1_t=F^3=0$, $c_2=0$, $2c_1=-a^1_5\lambda$,
$F^2=\lambda$ and thus $a^1_6=a^1_8=0$ and $F^2_{tt}=-a^1_7F^2$.
The template-form equation~\eqref{eq:2DSWEsSysOfTemplateFormEqs} reduces to
$b_x+a^1_5b+a^1_7x+a^1_9=0$.

If $a^1_5=0$, then we can set $a^1_7\in\{0,-1,1\}$ by using scaling equivalence transformations,
which leads to Cases~\ref{case7}, \ref{case8} and~\ref{case9}, respectively.
Note that $a^1_9=0\bmod G^\sim$ if $a^1_7\ne0$
and $a^1_9\in\{0,-1\}\bmod G^\sim$ if $a^1_7=0$.

If $a^1_5\ne0$, then $\lambda$ is a constant, and thus $a^1_7=0$.
We can again set $a^1_9=0$ by shifts with respect to $b$
as well as $a^1_5=1$ up to scaling equivalence transformations and alternating signs of~$(x,u)$.
This leads to Case \ref{case5}.

\subsection{Two independent template-form equations}

\looseness=-1
For $k=2$, the right-hand side of the equation~\eqref{eq:2DSWEsClassifyingEq} is
a linear combination of right-hand sides of the first and the second equations of the system~\eqref{eq:2DSWEsSysOfTemplateFormEqs}
with coefficients~$\lambda^1$ and~$\lambda^2$ that depend on~$t$,
\begin{gather*}
F^1_t(xb_x+yb_y)+c_2(yb_x-xb_y)+F^2b_x+F^3b_y+2(F^1_t-c_1)b\\
\qquad{}-F^1_{ttt}\frac{x^2+y^2}2-F^2_{tt}x-F^3_{tt}y-F^4\\
=\sum_{i=1}^2\lambda^i\bigg(a^i_1(xb_x+yb_y)+a^i_2(yb_x-xb_y)+a^i_3b_x+a^i_4b_y+a^i_5b\\
\qquad{}+a^i_6\frac{x^2+y^2}2+a^i_7x+a^i_8y+a^i_9\bigg).
\end{gather*}

\begin{remark}\label{rem:2DSWEsOnNonproportionalityOfMultipliersOfTemplate-formEqs}
The coefficients~$\lambda^1$ and~$\lambda^2$ are not proportional
with the same constant multiplier for all vector fields from~$\mathfrak g_b$
since otherwise there is no additional Lie-symmetry extension in comparison with
the more general case of Lie-symmetry extension with $k=1$,
where the corresponding linear combination of equations of the system~\eqref{eq:2DSWEsSysOfTemplateFormEqs}
plays the role of a single template-form equation.
For the same reason, both these coefficients do not vanish identically for some vector fields from~$\mathfrak g_b$.
\end{remark}

Splitting the resulting condition with respect to $b$, its derivatives $b_x$ and $b_y$ 
and the independent variables~$x$ and~$y$, we derive the system
\begin{equation}\label{gen_restriction}
\begin{split}
&F^1_t=a^1_1\lambda^1+a^2_1\lambda^2,\quad c_2=a^1_2\lambda^1+a^2_2\lambda^2,\quad F^2=a^1_3\lambda^1+a^2_3\lambda^2, \quad F^3=a^1_4\lambda^1+a^2_4\lambda^2,\\
&2(F^1_t-c_1)=a^1_5\lambda^1+a^2_5\lambda^2, \quad F^1_{ttt}=-a^1_6\lambda^1-a^2_6\lambda^2,\\
&F^2_{tt}=-a^1_7\lambda^1-a^2_7\lambda^2,\quad F^3_{tt}=-a^1_8\lambda^1-a^2_8\lambda^2,\quad F^4=-a^1_9\lambda^1-a^2_9\lambda^2.
\end{split}
\end{equation}

The further consideration for $k=2$ is partitioned into different cases
depending on the rank of the submatrix $A_2$ that is constituted by the first two columns of~$A$.
Since $\rank A_2\leqslant2$, we have the cases $\rank A_2=2$, $\rank A_2=1$ and $\rank A_2=0$.

\medskip\par\noindent $\boldsymbol{\rank A_2=2.}$
Linearly re-combining equations of the system~\eqref{eq:2DSWEsSysOfTemplateFormEqs},
we can set the matrix $A_2$ to be the identity matrix, i.e., $a^1_1=a^2_2=1$ and $a^1_2=a^2_1=0$.
To further simplify the form of the system~\eqref{eq:2DSWEsSysOfTemplateFormEqs},
we set $a^1_3=a^1_4=0$ by equivalence transformations of shifts with respect to $x$ and~$y$.
In view of the condition~\eqref{eq:CompatibilityCondInTermsOfAssociatedVFs},
the vector fields~$\mathbf v_1$ and~$\mathbf v_2$,
which are associated with the first and the second equations of the reduced system~\eqref{eq:2DSWEsSysOfTemplateFormEqs}, respectively, commute.
This yields the system of algebraic equations with respect to the coefficients~$a^i_j$,
\begin{equation}\label{alg_condition}
\begin{split}
&a^2_3=a^2_4=0,\quad
 a^1_7-a^2_8+a^1_8a^2_5-a^1_5a^2_8=0,\quad
 a^1_8+a^2_7-a^1_7a^2_5+a^1_5a^2_7=0,\\
&2a^2_6-a^1_6a^2_5+a^1_5a^2_6=0,\quad
 a^1_9a^2_5-a^1_5a^2_9=0.
\end{split}
\end{equation}
The reduced form of the system~\eqref{gen_restriction} is
\begin{equation}\label{restrict2.2}
\begin{split}
&F^1_t=\lambda^1,\quad c_2=\lambda^2,\quad F^2=0, \quad F^3=0,\quad (a^1_5-2)\lambda^1=-2c_1-a^2_5c_2, \\
&\lambda^1_{tt}+a^1_6\lambda^1+a^2_6\lambda^2=0,\quad a^1_7\lambda^1+a^2_7\lambda^2=0,\quad a^1_8\lambda^1+a^2_8\lambda^2=0,\quad F^4=-a^1_9\lambda^1-a^2_9\lambda^2.
\end{split}
\end{equation}
In view of Remark~\ref{rem:2DSWEsOnNonproportionalityOfMultipliersOfTemplate-formEqs},
the seventh and the eighth equations of the system~\eqref{restrict2.2}
imply $a^1_7=a^2_7=0$ and $a^1_8=a^2_8=0$, respectively.
The reduced form of the system~\eqref{eq:2DSWEsSysOfTemplateFormEqs} in the polar coordinates~$(r,\varphi)$ is
$rb_r+a^1_5b+\frac12a^1_6r^2+a^1_9=0$, $b_\varphi=a^2_5b+\frac12a^2_6r^2+a^2_9$.
In view of the last equation of~\eqref{alg_condition},
up to shifts of~$b$
we can set $a^1_9=a^2_9=0$ if $(a^1_5,a^2_5)\ne(0,0)$.

The further consideration depends on whether or not the parameter~$a^1_5$ is equal to 2,
and if it is not, whether or not the parameter~$a^2_5$ is zero.

If $a^1_5=2$, then we can assume the parameter~$a^1_6$ to belong to $\{0,-4,4\}$,
and $a^2_6=a^2_5a^1_6/4$.
Integrating the corresponding system~\eqref{eq:2DSWEsSysOfTemplateFormEqs},
we find the general form of the arbitrary element~$b$,
\begin{gather*}
b=b_0r^{-2}\exp(a^2_5\varphi)-\frac{a^1_6}8r^2,
\end{gather*}
where the integration constant~$b_0$ is nonzero
since otherwise this value of~$b$ is associated with the value $k=5$.
This is why we can scale~$b_0$ by an equivalence transformation to $\varepsilon=\pm1$,
which leads, depending on the value of~$a^1_6$,
to Cases~\ref{case12a},~\ref{case12b} and \ref{case12c} of Theorem~\ref{thm:GroupClassificationOf2DSWEs1}.

If $a^1_5\ne2$, then $\lambda^1$ is a constant.
Then the sixth equation of the system~\eqref{restrict2.2} takes the form $a^1_6\lambda^1+a^2_6\lambda^2=0$,
implying, according to Remark~\ref{rem:2DSWEsOnNonproportionalityOfMultipliersOfTemplate-formEqs},
that $a^1_6=a^2_6=0$.
Depending on whether $(a^1_5,a^2_5)\ne(0,0)$ or $a^1_5=a^2_5=0$,
we obtain Cases~\ref{case10} and~\ref{case11} of Theorem~\ref{thm:GroupClassificationOf2DSWEs1}, respectively.

\par\medskip\par\noindent $\boldsymbol{\rank A_2=1.}$
Linearly recombining equations of the system~\eqref{eq:2DSWEsSysOfTemplateFormEqs},
we reduce the matrix $A_2$ to the form
\begin{gather*}
A_2=\begin{pmatrix}
a^1_1&a^1_2\\
0&0
\end{pmatrix},
\end{gather*}
i.e., $a^2_1=a^2_2=0$ and $(a^1_1,a^1_2)\ne(0,0)$.
We also have $(a^2_3,a^2_4)\ne (0,0)$ since $\rank A_4=2$.
Hence we can set $a^2_3=1$, $a^2_4=0$ by a rotation equivalence transformation
and re-scaling the second equation and make $a^1_3=a^1_4=0$ by shifts of $x$ and $y$.
The condition~\eqref{eq:CompatibilityCondInTermsOfAssociatedVFs} implies that $a^1_2=0$ and hence $a^1_1\ne0$,
so we can set $a^1_1=1$ by rescaling of the first equation.
Then the condition~\eqref{eq:CompatibilityCondInTermsOfAssociatedVFs} is equivalent to the commutation relation $[\mathbf v_1,\mathbf v_2]=-\mathbf v_2$,
yielding the following system of algebraic equations on the remaining coefficients~$a^i_j$:
\begin{equation}\label{rk1_alg_restriction}
a^2_5=0,\quad
a^2_6(a^1_5+3)=0,\quad
a^2_8(a^1_5+2)=0,\quad
a^2_7(a^1_5+2)=a^1_6,\quad
a^2_9(a^1_5+1)=a^1_7.
\end{equation}
The system~\eqref{gen_restriction} is simplified to
\begin{equation}\label{rk1_split}
\begin{split}
&F^1_t=\lambda^1,\quad c_2=0,\quad F^2=\lambda^2,\quad F^3=0,\quad 2c_1=(2-a^1_5)\lambda^1,\\
&F^1_{ttt}=-a^1_6\lambda^1-a^2_6\lambda^2,\quad F^2_{tt}=-a^1_7\lambda^1-a^2_7\lambda^2,\quad
a^1_8\lambda^1+a^2_8\lambda^2=0,\quad F^4=-a^1_9\lambda^1-a^2_9\lambda^2.
\end{split}
\end{equation}
In view of Remark~\ref{rem:2DSWEsOnNonproportionalityOfMultipliersOfTemplate-formEqs},
the eighth equation of the system~\eqref{rk1_split} implies $a^1_8=a^2_8=0$.

Supposing that $a^2_6\ne0$, we successively derive from the second equation of~\eqref{rk1_alg_restriction}
and the system~\eqref{rk1_split} that $a^1_5=-3$, $\lambda^1$ is a constant,
$a^1_6\lambda^1+a^2_6\lambda^2=0$ and, according to Remark~\ref{rem:2DSWEsOnNonproportionalityOfMultipliersOfTemplate-formEqs}, $a^1_6=a^2_6=0$,
which is a contradiction. Hence $a^2_6=0$.

It is obvious from the fifth equation of~\eqref{rk1_split} that the value $a^1_5=2$ is special.
For $a^1_5\ne2$, we obtain $\lambda^1=\const$, $F^1_{tt}=0$, $a^1_6=0$,
and thus the fourth equation of the system~\eqref{rk1_alg_restriction} takes the form $(a^1_5+2)a^2_7=0$,
implying $a^2_7=0$ if $a^1_5\ne-2$.
Therefore, the value $a^1_5=-2$ is special as well.
In the course of integrating the system~\eqref{eq:2DSWEsSysOfTemplateFormEqs}, the value $a^1_5=0$ is additionally singled out.
Moreover, if $a^1_5\ne0$, we can make $a^1_9=0$ using a shift of~$b$.
As a result, we need to separately consider each of the above values $2$, $0$, $-2$ of~$a^1_5$
and the case $a^1_5\notin\{-2,0,2\}$.

\medskip\par\noindent\textbf{1.}
$a^1_5=2$. Shifting $b$, we set $a^1_9=0$.
The system~\eqref{rk1_alg_restriction} reduces to the equations $a^2_7=a^1_6/4$ and $a^2_9=a^1_7/3$.

Let $a^1_6\ne0$.
Then shifting $x$ and~$b$ and recombining equations of the system~\eqref{eq:2DSWEsSysOfTemplateFormEqs},
we also set $a^1_7=0$, and consequently $a^2_9=0$.
The general solution to the system~\eqref{eq:2DSWEsSysOfTemplateFormEqs} is
$b=b_0y^{-2}-\frac18a^1_6(x^2+y^2)$,
where the integration constant~$b_0$ is nonzero 
since otherwise this value of the arbitrary element~$b$ is associated with $k=5$.
Using scaling equivalence transformations, we can set $b_0,a^1_6/4\in\{-1,1\}$.
Depending on the sign of $a^1_6$,
we obtain Cases~\ref{case16b} and~\ref{case16c} of Theorem~\ref{thm:GroupClassificationOf2DSWEs1}.

If $a^1_6=0$, then $a^2_7$ is also zero. Integrating~\eqref{eq:2DSWEsSysOfTemplateFormEqs}, we get
$b=b_0y^{-2}-a^2_9x$,
where again the integration constant~$b_0$ is nonzero since otherwise this value of the arbitrary element~$b$
is associated with $k=5$.
Using scaling equivalence transformations and alternating the signs of $(x,u)$,
we can set $b_0\in\{-1,1\}$ and $a^2_9\in\{0,-1\}$,
which gives Case~\ref{case16a}.

\medskip\par\noindent\textbf{2.}
$a^1_5=0$. Then $a^2_7=0$ and $a^1_7=a^2_9$.
The system~\eqref{eq:2DSWEsSysOfTemplateFormEqs} integrates to
$b=-a^1_9\ln |y|-a^2_9x+b_0$,
where the parameter~$a^1_9$ is nonzero since otherwise $k=5$.
The integration constant~$b_0$ can be set to zero by shifts of~$b$,
as well as $a^1_6\in\{-1,1\}$ and $a^2_9\in\{-1,0\}$
up to scaling equivalence transformations and alternating the signs of $(x,u)$.
This leads to Case~\ref{case14}.

\medskip\par\noindent\textbf{3.}
$a^1_5=-2$. Then $a^1_7=-a^2_9$.
We shift~$b$ for setting $a^1_9=0$.
The general solution of the system~\eqref{eq:2DSWEsSysOfTemplateFormEqs} is
$b=b_0y^2-\frac12a^2_7x^2-a^2_9x$
but this value of the arbitrary element $b$ is associated with $k>2$.

\medskip\par\noindent\textbf{4.}
$a^1_5\notin\{-2,0,2\}$.
Solving the system~\eqref{eq:2DSWEsSysOfTemplateFormEqs}, we obtain
$b=b_0|y|^{-a^1_5}-a^2_9x$,
where the integration constant~$b_0$ has be nonzero for $k=2$.
Setting $b_0\in\{-1,1\}$ and $a^2_9\in\{-1,0\}$
by scaling equivalence transformations and alternating the signs of $(x,u)$
results in Case~\ref{case13}.

\medskip\par\noindent $\boldsymbol{\rank A_2=0.}$
This means that $a^1_1=a^1_2=a^2_1=a^2_2=0$.
Since $\rank A_4=2$, we can linearly recombine equations of the system~\eqref{eq:2DSWEsSysOfTemplateFormEqs}
to set $a^1_3=a^2_4=1$ and $a^1_4=a^2_3=0$.
The compatibility condition~\eqref{eq:CompatibilityCondInTermsOfAssociatedVFs} means
that the vector fields~$\mathbf v_1$ and~$\mathbf v_2$ associated to equations of the reduced system~\eqref{eq:2DSWEsSysOfTemplateFormEqs}
commutes, $[\mathbf v_1,\mathbf v_2]=0$, which results in the system
\begin{equation}\label{eq:2DSWEsK1RankA20SysForA}
\begin{split}
&a^2_6-a^1_7a^2_5+a^1_5a^2_7=0,\quad
 a^1_6a^2_5-a^1_5a^2_6=0,\\
&a^1_6+a^1_8a^2_5-a^1_5a^2_8=0,\quad
a^1_8-a^2_7+a^1_9a^2_5-a^1_5a^2_9=0.
\end{split}
\end{equation}

Suppose that $a^1_5=a^2_5=0$.
The system~\eqref{eq:2DSWEsK1RankA20SysForA} then reduces to $a^1_6=a^2_6=0$ and $a^1_8=a^2_7$.
In view of the last equation, we can set $a^1_8=a^2_7=0$ by rotation equivalence transformations.
The general solution of the system~\eqref{eq:2DSWEsSysOfTemplateFormEqs} is
\begin{gather*}
b=-\frac{a^1_7}2 x^2-\frac{a^2_8}2 y^2-a^1_9x-a^2_9y+b_0,
\end{gather*}
where $b_0$ is an integration constant.
This form of the arbitrary element $b$ is related to the value $k=3$ if $a^1_7\ne a^2_8$
and to the value $k=5$ if $a^1_7=a^2_8$, which contradicts the supposition $k=2$.

This is why $(a^1_5,a^2_5)\ne(0,0)$, and we set $a^1_5=0$ and $a^2_5=-1$
by equivalence transformations of rotations, scalings and alternating signs.
Then the first equation of the system~\eqref{eq:2DSWEsK1RankA20SysForA} implies $a^1_6=0$,
and thus the first six equations of the system~\eqref{gen_restriction} take the form
$F^1_t=0$, $c_2=0$, $F^2=\lambda^1$, $F^3=\lambda^2=2c_1$ and $a^2_6\lambda^2=0$.
In view of the last equation, we get $a^2_6=0$.
Therefore, the system~\eqref{eq:2DSWEsK1RankA20SysForA} is equivalent to $a^1_7=a^1_8=0$ and $a^2_7=-a^1_9$.
The eighth equation of~\eqref{gen_restriction} gives $a^2_8\lambda^2=0$, i.e., $a^2_8=0$.
We can set $a^2_9=0$ up to equivalence transformations of shifts of $b$.
The system~\eqref{eq:2DSWEsSysOfTemplateFormEqs} integrates to
$b=b_0e^y+a^2_7x$,
where the integration constant~$b_0$ is nonzero
since otherwise $b$ is a linear function, for which $k=5$.
Equivalence transformations of scalings and alternating the signs of~$(x,u)$
allow us to set $b_0=\pm1$ and $a^2_7\in\{0,1\}$.
This results in Case~\ref{case15} of Theorem~\ref{thm:GroupClassificationOf2DSWEs1}.

\subsection{More independent template-form equations}

We show below that $k>2$ if and only if $b$ is at most quadratic polynomial in~$(x,y)$.

\medskip\par\noindent$\boldsymbol{k=3.}$
Since $\rank A_4=\rank A=k=3$, we have that $\rank A_2>0$.

Suppose that the submatrix $A_2$ is of rank two.
Recombining equations of the system~\eqref{eq:2DSWEsSysOfTemplateFormEqs}, we can reduce this submatrix to the form
\[
A_2=\begin{pmatrix}
1&0\\
0&1\\
0&0
\end{pmatrix}.
\]
Then the projections $\hat{\mathbf v}_1$, $\hat{\mathbf v}_2$ and $\hat{\mathbf v}_3$
of the vector fields $\mathbf v_1$, $\mathbf v_2$ and $\mathbf v_3$
to the space with the coordinates $(x,y)$ are
\begin{gather}\label{k3_vf_part}
\hat{\mathbf v}_1=(x+a^1_3)\p_x+(y+a^1_4)\p_y,\quad
\hat{\mathbf v}_2=(y+a^2_3)\p_x-(x-a^2_4)\p_y,\quad
\hat{\mathbf v}_3=a^3_3\p_x+a^3_4\p_y.
\end{gather}
In view of the condition~\eqref{eq:CompatibilityCondInTermsOfProjectionsOfAssociatedVFs},
the commutator $[\hat{\mathbf v}_2,\hat{\mathbf v}_3]=-a^3_4\p_x+a^3_3\p_y$
should belong to the span $\langle\hat{\mathbf v}_1,\hat{\mathbf v}_2,\hat{\mathbf v}_3\rangle$
but this is not the case, which is a contradiction.

Therefore, $\rank A_2=1$, and thus the matrix $A_4$ can be reduced to the form
\[
A_4=\begin{pmatrix}
a^1_1&a^1_2&0&0\\
0&0&1&0\\
0&0&0&1
\end{pmatrix},
\quad\mbox{where}\quad (a^1_1,a^1_2)\ne(0,0).
\]
Then the compatibility condition~\eqref{eq:CompatibilityCondInTermsOfAssociatedVFs} is equivalent to
the commutation relations
\[
[\mathbf v_1,\mathbf v_2]=-a^1_1\mathbf v_2+a^1_2\mathbf v_3,\quad
[\mathbf v_1,\mathbf v_3]=-a^1_1\mathbf v_3-a^1_2\mathbf v_2,\quad
[\mathbf v_2,\mathbf v_3]=0.
\]
From the first two commutation relations, we obtain $a^1_1a^2_5-a^1_2a^3_5=0$, $a^1_2a^2_5+a^1_1a^3_5=0$,
and thus $a^2_5=a^3_5=0$ since $(a^1_1,a^1_2)\ne(0,0)$.
Then the last commutation relation yields $a^2_6=a^3_6=0$ and $a^3_7=a^2_8$.
Up to rotation equivalence transformations, we can set $a^2_8=a^3_7=0$.
Under this gauge, from the former commutation relations we get the system
\begin{gather}\label{eq:2DSWEsK3SysForA}
\begin{split}
&(2a^1_1+a^1_5)(a^2_7-a^3_8)=0,\quad
 a^1_2(a^2_7-a^3_8)=0,\\
&a^1_6=(2a^1_1+a^1_5)a^2_7,\quad
 a^1_7=(a^1_1+a^1_5)a^2_9-a^1_2a^3_9,\quad
 a^1_8=(a^1_1+a^1_5)a^3_9+a^1_2a^2_9.
\end{split}
\end{gather}
Since the arbitrary element~$b$ satisfies the equations $b_x+a^2_7x+a^2_9=0$ and $b_y+a^3_8y+a^3_9=0$,
it is a quadratic function of $(x,y)$.
More specifically,
\begin{gather}\label{eq:2DSWEsK3FormOfB}
b=-\frac12a^2_7x^2-\frac12a^3_8y^2-a^2_9x-a^3_9y
\end{gather}
up to equivalence transformations of shifts with respect to~$b$.
Hence $a^2_7\ne a^3_8$ since otherwise $k=5$ for this value of~$b$,
which contradicts the supposition $k=3$.
Then the system~\eqref{eq:2DSWEsK3SysForA} reduces~to
\begin{gather*}
a^1_2=0,\quad a^1_5=-2a^1_1,\quad a^1_6=0,\quad a^1_7=-a^1_1a^2_9,\quad a^1_8=-a^1_1a^3_9
\end{gather*}
and guarantees that the above value of~$b$ satisfies the entire corresponding system~\eqref{eq:2DSWEsSysOfTemplateFormEqs}.

Modulo $G^\sim$-equivalence, we can assume that $a^2_7=\pm 1$, $a^2_9=0$; $|a^3_8|<|a^2_7|$ if $a^2_7a^3_8>0$;
$a^3_9=0$ if $a^3_8\ne0$; $a^3_9\in\{-1,0\}$ if $a^3_8=0$.
$G^\sim$-inequivalent values of~$b$ of the form~\eqref{eq:2DSWEsK3FormOfB}
with the associated maximal Lie invariance algebras are listed in
Cases~\ref{case17}--\ref{case21} of Theorem~\ref{thm:GroupClassificationOf2DSWEs1}.

\medskip\par\noindent$\boldsymbol{k=4.}$
Since $\rank A_4=\rank A=4$, by linearly re-combining the equations~\eqref{eq:2DSWEsSysOfTemplateFormEqs},
we can set $A_4$ to be the $4\times4$ identity matrix.
In view of the form of the vector fields~$\mathbf v_1$, \dots, $\mathbf v_4$
associated to the equations of the reduced system~\eqref{eq:2DSWEsSysOfTemplateFormEqs},
\noprint{
\begin{gather*}
\mathbf v_1=x\p_x+y\p_y-\left(a^1_5b+\tfrac12a^1_6(x^2+y^2)+a^1_7x+a^1_8y+a^1_9\right)\p_b,\\
\mathbf v_2=y\p_x-x\p_y-\left(a^2_5b+\tfrac12a^2_6(x^2+y^2)+a^2_7x+a^2_8y+a^2_9\right)\p_b,\\
\mathbf v_3=\p_x-\left(a^3_5b+\tfrac12a^3_6(x^2+y^2)+a^3_7x+a^3_8y+a^3_9\right)\p_b,\\
\mathbf v_4=\p_y-\left(a^4_5b+\tfrac12a^4_6(x^2+y^2)+a^4_7x+a^4_8y+a^4_9\right)\p_b,
\end{gather*}
}
the compatibility condition~\eqref{eq:CompatibilityCondInTermsOfAssociatedVFs} implies
the commutation relations
\[
[\mathbf v_2,\mathbf v_3]=\mathbf v_4,\quad
[\mathbf v_2,\mathbf v_4]=-\mathbf v_3,\quad
[\mathbf v_3,\mathbf v_4]=0.
\]
From the first two commutation relations, we get that $a^3_5=a^4_5=0$.
The last commutation relation together with the previous restrictions on the coefficients $a^i_j$ yields $a^3_6=a^4_6=0$ and $a^3_8=a^4_7$.
Returning to the first two commutation relations, we obtain the equations
\[
a^2_6=-2a^3_8+a^2_5a^3_7=2a^3_8+a^2_5a^4_8, \quad a^3_7-a^4_8+a^2_5a^3_8=0
\]
implying $a^3_8=0$ and $a^3_7=a^4_8$.
Since the arbitrary element~$b$ satisfies the equations $b_x+a^3_7x+a^3_9=0$ and $b_y+a^3_7y+a^4_9=0$,
it is a quadratic function of $(x,y)$ with the same coefficients of $x^2$ and of $y^2$ 
and with zero coefficient of~$xy$.
This means that in fact $k=5$, which contradicts the supposition $k=4$.

\medskip\par\noindent$\boldsymbol{k=5.}$
The $5\times 5$ matrix~$A_5$ of the coefficients of the system~\eqref{eq:2DSWEsSysOfTemplateFormEqs} is of rank $5$
and, up to recombining equations of this system, can be assumed to be the $5\times5$ identity matrix.
Then the last equation of the system~\eqref{eq:2DSWEsSysOfTemplateFormEqs} implies
that $b$ is the specific quadratic polynomial of $(x,y)$,
\[
b=-\tfrac12a^5_6(x^2+y^2)-a^5_7x-a^5_8y-a^5_9.
\]
There are four $G^\sim$-inequivalent values of the arbitrary element~$b$ among such quadratic polynomials,
$b=0$, $b=x$, $b=\tfrac12(x^2+y^2)$ and $b=-\tfrac12(x^2+y^2)$,
which correspond to Cases~\ref{case22a}, \ref{case22b}, \ref{case22c} and~\ref{case22d}
of Theorem~\ref{thm:GroupClassificationOf2DSWEs1}, respectively.

\section[Additional equivalence transformations and modified classification result]
{Additional equivalence transformations\\ and modified classification result}\label{sec:AdditionalEquivalencesSWE}

It is obvious that the class~\eqref{eq:2DSWEs} is not normalized.
In other words, it possesses admissible transformations
(e.g., those related to Lie symmetries of systems from this class)
that are not generated by elements of~$G^\sim$.
See \cite{bihl11Dy,kuru18a,opan17a,popo06Ay,popo10Ay,vane2020a} for definitions.
Moreover, we will show below that the class~\eqref{eq:2DSWEs} is not even semi-normalized
since some of its admissible transformations cannot be presented as compositions
of those generated by elements of~$G^\sim$ and those generated  by point symmetries of systems from this class.
Such admissible transformations may lead to additional point equivalences
among classification cases listed in Theorem~\ref{thm:GroupClassificationOf2DSWEs1}.

Since we do not have the complete description of the equivalence groupoid of the class~\eqref{eq:2DSWEs},
in the course of looking for the above additional equivalences,
we need to use algebraic tools that do not rely on this description.
If two systems of differential equations are similar with respect to a point transformation,
then the corresponding maximal Lie invariance algebras are isomorphic in the sense of abstract Lie algebras.
Moreover, these maximal Lie invariance algebras are similar
as realizations of Lie algebras by vector fields with respect to the same point transformation.
This gives necessary conditions of similarity for systems of differential equations with respect to point transformations.

Lie algebras of different dimensions are nonisomorphic.
Hence we categorize the Lie algebras presented in Theorem~\ref{thm:GroupClassificationOf2DSWEs1}
according to their dimensions to distinguish the cases that are definitely not equivalent to each other
with respect to point transformations,
\begin{itemize}\itemsep-.5ex
\item \ $\dim\mathfrak g_b=2$: Cases~\ref{case1}, \ref{case2}, \ref{case3}, \ref{case4} and \ref{case5};
\item \ $\dim\mathfrak g_b=3$: Cases~\ref{case6a}, \ref{case6b}, \ref{case6c}, \ref{case7}, \ref{case8}, \ref{case9}, \ref{case10} and \ref{case11};
\item \ $\dim\mathfrak g_b=4$: Cases~\ref{case12a}, \ref{case12b}, \ref{case12c}, \ref{case13}, \ref{case14} and~\ref{case15};
\item \ $\dim\mathfrak g_b=5$: Cases~\ref{case16a}, \ref{case16b} and~\ref{case16c};
\item \ $\dim\mathfrak g_b=6$: Cases~\ref{case17}, \ref{case18}, \ref{case19}, \ref{case20} and~\ref{case21};
\item \ $\dim\mathfrak g_b=9$: Cases~\ref{case22a}, \ref{case22b}, \ref{case22c} and~\ref{case22d}.
\end{itemize}
However, the same dimension of algebras does not ensure their isomorphism.

Finding a pair of classification cases with isomorphic maximal Lie invariance algebras
and fixing bases of these algebras that are concordant under the found algebra isomorphism,
we aim to obtain a point transformation that respectively maps the basis elements of the first algebra to the basis elements of the second one.
The existence of such a point transformation hints that the two classification cases may be equivalent with respect to this very transformation.

In this way, we find three families of $G^\sim$-inequivalent admissible transformations for the class~\eqref{eq:2DSWEs}
that are not induced by equivalence transformations of this class and whose target arbitrary elements differ from their source arbitrary elements.
In each of these families, the source arbitrary elements are parameterized by an arbitrary function of a single argument.
We present these families jointly with
the corresponding induced additional equivalences between classification cases of Theorem~\ref{thm:GroupClassificationOf2DSWEs1}:

\begin{enumerate}
\item
$b=r^{-2}f(\varphi)-\frac12r^2$, \ $\tilde b=\tilde r^{-2}f(\tilde\varphi)$,\\[1ex]
$\tilde t=\tan t$, \ $\tilde x=x\sec t$, \ $\tilde y=y\sec t$, \ $\tilde u=u\cos t+x\sin t$, \ $\tilde v=v\cos t+y\sin t$, \ $\tilde h=h\cos^2 t$,\\[1ex]
\ref{case6c} $\to$ \ref{case6a}, \
\ref{case12c} $\to$ \ref{case12a}, \
\ref{case16b} $\to$ \ref{case16a}$_{\delta=0}$, \
\ref{case22d}  $\to$ \ref{case22a}.
\item
$b=r^{-2}f(\varphi)+\frac12r^2$, \ $\tilde b=\tilde r^{-2}f(\tilde\varphi)$,\\[1ex]
$\tilde t=\tfrac12e^{2t}$, \ $\tilde x=e^tx$, \ $\tilde y=e^ty$, \ $\tilde u=e^{-t}\left( u+x\right)$, \ $\tilde v=e^{-t}\left( v+y\right)$, \ $\tilde h=e^{-2t}h$,\\[1ex]
\ref{case6b} $\to$ \ref{case6a}, \
\ref{case12b} $\to$ \ref{case12a}, \
\ref{case16c} $\to$ \ref{case16a}$_{\delta=0}$, \
\ref{case22c}  $\to$ \ref{case22a}.
\item
$b=f(y)+x$, \ $\tilde b=f(\tilde y)$,\\[1ex]
$\tilde t=t$, \ $\tilde x=x+\tfrac12t^2$, \ $\tilde y=y$, \ $\tilde u=u+t$, \ $\tilde v=v$, \ $\tilde h=h$,\\[1ex]
\ref{case22b}  $\to$ \ref{case22a}, \quad
\ref{case7}, \ref{case13}, \ref{case14}, \ref{case15}, \ref{case16a}, \ref{case18}, \ref{case20}:%
\footnote{%
For Cases~\ref{case18} and~\ref{case20}, we should compose the corresponding point transformation with
the permutation $(x,u)\leftrightarrow(y,v)$, 
which is related to an equivalence transformations of the class~\eqref{eq:2DSWEs}.
}
\ $\delta=1$ $\to$ $\delta=0$.
\end{enumerate}

The first and the second families of admissible transformations can be generalized to the rotating reference frame.
The generalization of the transformation with $f=0$ from the first family
to the rotating reference frame was found for the first time
for the shallow water equations in cylindrical coordinates in~\cite[Theorem 1]{ches11a}.

Each of the above admissible transformations is $G^\sim$-equivalent to no admissible transformation
generated by point symmetries of systems from this class.
Therefore, the class~\eqref{eq:2DSWEs} is not semi-normalized.

As a by-product, we also prove the following assertions.

\begin{proposition}
Any system from the class~\eqref{eq:2DSWEs} that is invariant with respect to a nine-dimensional Lie algebra of vector fields
is equivalent, up to point transformations, to the system from the same class with~$b=0$,
which is the system of shallow water equations with flat bottom topography.
\end{proposition}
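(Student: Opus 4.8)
The plan is to read off the proposition from the group-classification result of Theorem~\ref{thm:GroupClassificationOf2DSWEs1} together with the three families of additional admissible transformations constructed immediately above. First I would argue that the hypothesis forces maximality: by the Corollary bounding $\dim\mathfrak g_b$ by nine, a system $\mathcal L_b$ admitting a nine-dimensional Lie algebra of symmetry vector fields has a maximal Lie invariance algebra of dimension at least nine and hence exactly nine, so the given algebra must coincide with $\mathfrak g_b$. Consulting the dimension-based categorization at the beginning of Section~\ref{sec:AdditionalEquivalencesSWE}, the only classification cases with $\dim\mathfrak g_b=9$ are Cases~\ref{case22a}--\ref{case22d}. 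Therefore, up to $G^\sim$-equivalence, the arbitrary element~$b$ equals one of $0$, $x$, $\tfrac12r^2$ or $-\tfrac12r^2$.

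Next I would dispose of each representative. Case~\ref{case22a} is already $b=0$, so nothing is required. For the three remaining representatives the explicit point transformations established above apply directly, with the parameter function taken to be $f\equiv0$: the third family sends $b=x$ (Case~\ref{case22b}) to $\tilde b=f(\tilde y)=0$; the second family sends $b=\tfrac12r^2$ (Case~\ref{case22c}) to $\tilde b=\tilde r^{-2}f(\tilde\varphi)=0$; and the first family sends $b=-\tfrac12r^2$ (Case~\ref{case22d}) to $\tilde b=0$. Composing the relevant element of~$G^\sim$ (which brings the original system to one of the four representatives) with the corresponding family transformation produces a single point transformation mapping the given system onto the system with $b=0$, that is, onto the shallow water equations over flat bottom topography.

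The argument carries essentially no computational burden, since the substantive work---constructing and verifying the three families of admissible transformations---has already been done; this is precisely why the statement is offered as a by-product. The only step demanding any care is the reduction at the outset: one must invoke the upper bound $\dim\mathfrak g_b\leqslant9$ to guarantee that ``invariant with respect to a nine-dimensional algebra'' entails maximality, so that Theorem~\ref{thm:GroupClassificationOf2DSWEs1} is applicable and the four representatives are genuinely exhaustive. After that reduction, the proposition is a straightforward bookkeeping consequence of the displayed families.
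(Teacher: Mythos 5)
Your proposal is correct and follows exactly the route the paper intends: the proposition is stated as a by-product of Theorem~\ref{thm:GroupClassificationOf2DSWEs1} (which confines the nine-dimensional cases to~\ref{case22a}--\ref{case22d}) combined with the three displayed families of admissible transformations, whose $f\equiv0$ instances give precisely the maps \ref{case22b}, \ref{case22c}, \ref{case22d} $\to$ \ref{case22a}. Your preliminary observation that the dimension bound forces the given nine-dimensional algebra to coincide with~$\mathfrak g_b$ is the right (and only) point needing care, and you handle it correctly.
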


\begin{proposition}\looseness=-1
Any system from the class~\eqref{eq:2DSWEs} with five-dimensional maximal Lie invariance algebra
is reduced by a point transformation to the system from the same class with~$b=\pm y^{-2}$.
\end{proposition}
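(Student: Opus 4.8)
The plan is to read off the five-dimensional cases from Theorem~\ref{thm:GroupClassificationOf2DSWEs1} and then collapse them onto a single normal form using the three families of additional admissible transformations exhibited above. First I would invoke the dimension count preceding this proposition: the value $\dim\mathfrak g_b=5$ is attained, up to $G^\sim$-equivalence, exactly by the systems of Cases~\ref{case16a}, \ref{case16b} and~\ref{case16c}. Consequently, any system of the class~\eqref{eq:2DSWEs} whose maximal Lie invariance algebra is five-dimensional is $G^\sim$-equivalent, hence point-equivalent, to a system whose arbitrary element is one of $\varepsilon y^{-2}+\delta x$, $\varepsilon y^{-2}+\tfrac12r^2$ or $\varepsilon y^{-2}-\tfrac12r^2$, with $\varepsilon=\pm1$ and $\delta\in\{0,1\}$. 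Since compositions of point transformations are again point transformations, it suffices to send each of these canonical forms to $b=\pm y^{-2}$ by a point transformation.

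The target is Case~\ref{case16a} with $\delta=0$, that is $b=\varepsilon y^{-2}=\pm y^{-2}$, so there nothing is required. For Case~\ref{case16a} with $\delta=1$, i.e.\ $b=\varepsilon y^{-2}+x$, I would apply the third family with $f(y)=\varepsilon y^{-2}$: the transformation $\tilde x=x+\tfrac12t^2$, $\tilde u=u+t$ (identical in the remaining variables) carries $b=f(y)+x$ into $\tilde b=f(\tilde y)=\varepsilon\tilde y^{-2}$. For Cases~\ref{case16b} and~\ref{case16c}, where $b=\varepsilon y^{-2}\pm\tfrac12r^2$, I would write $\varepsilon y^{-2}=r^{-2}f(\varphi)$ with $f(\varphi)=\varepsilon\sin^{-2}\varphi$ and invoke the first two families, whose sources are $b=r^{-2}f(\varphi)\mp\tfrac12r^2$ and whose common target is $\tilde b=\tilde r^{-2}f(\tilde\varphi)=\varepsilon\tilde y^{-2}$; matching the sign of the $\tfrac12r^2$-term to the appropriate family disposes of both cases. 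In each reduction the function $f$, and thus the coefficient $\varepsilon$ of $y^{-2}$, is transported unchanged, so the reduced bottom topography is precisely $+y^{-2}$ when $\varepsilon=1$ and $-y^{-2}$ when $\varepsilon=-1$.

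The remaining points are bookkeeping rather than a genuine obstacle, since the three families are already established above. I would verify that each composite map is an honest local point transformation: the families involve $\tan t$, $\sec t$ and $e^t$, which are smooth and invertible on suitable $t$-intervals, so they are diffeomorphisms there, and composing them after the relevant $G^\sim$-transformation again yields a point transformation between two systems of the class~\eqref{eq:2DSWEs}. I would also note that the identity $\varepsilon y^{-2}=r^{-2}\varepsilon\sin^{-2}\varphi$ holds only off the $x$-axis, in accordance with the local nature of these symmetries recorded in Remark~\ref{rem:2DSWEsOnPhi}, so the polar families act on the admissible domain. Collecting the three reductions shows that every five-dimensional case is mapped by a point transformation onto the system with $b=\pm y^{-2}$, as claimed.
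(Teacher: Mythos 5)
Your argument is correct and is essentially the paper's own: the proposition is stated there as a by-product of Theorem~\ref{thm:GroupClassificationOf2DSWEs1} (which confines the five-dimensional cases to~\ref{case16a}--\ref{case16c} up to $G^\sim$-equivalence) combined with the three families of additional admissible transformations, applied exactly as you do with $f(\varphi)=\varepsilon\sin^{-2}\varphi$ for the polar families and $f(y)=\varepsilon y^{-2}$ for the third. Your sign-matching of the $\tfrac12r^2$-term to the appropriate family is the right reading, and no further argument is needed.
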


For the other possible dimensions of maximal Lie invariance algebras of systems from the class~\eqref{eq:2DSWEs},
we prove the inequivalence of the remaining classification cases whenever it is possible to do so using the algebraic technique
based on Mubarakzianov's classification of Lie algebras up to dimension four~\cite{muba63a}
and Turkowski's classification of six-dimensional solvable Lie algebras with four-dimensional nilradicals~\cite{turk90a}.
For convenience, we take these classifications in the form given in~\cite{boyk06a}.
There, the notation of algebras from Mubarakzianov's classification was modified, in particular,
by indicating parameters for families of algebras,
and the basis elements of the algebras from Turkowski's classification
were renumbered in order to have bases in $K$-canonical forms.
For each abstract Lie algebra appearing in the consideration,
we present all the nonzero commutation relations among basis elements up to antisymmetry.

Unfortunately, the algebraic criterion of inequivalence with respect to point transformations
is not sufficiently powerful for systems with two-dimensional maximal Lie invariance algebra
since there are only two nonisomorphic two-dimensional Lie algebras~$2A_1$ and~$A_{2.1}$, the abelian and the non-abelian ones.
Applying this criterion, we can only partition the corresponding classification cases into two sets,
Cases~\ref{case1}$_{\nu=2}$ and \ref{case3} with abelian two-dimensional Lie invariance algebras and
Cases~\ref{case1}$_{\nu\ne2}$, \ref{case2}, \ref{case4} and \ref{case5} with non-abelian two-dimensional Lie invariance algebras.
There are definitely no point transformations between cases that belong to different sets.

For the classification cases with maximal Lie invariance algebras of greater dimensions,
the algebraic criterion is more advantageous.
Thus, the maximal Lie invariance algebras
in Cases~\ref{case6a}, \ref{case7}$_{\delta=0}$, \ref{case8}, \ref{case9}, \ref{case10} and~\ref{case11}
of Theorem~\ref{thm:GroupClassificationOf2DSWEs1}
are isomorphic to the three-dimensional Lie algebras
${\rm sl}(2,\mathbb R)$, $A_{3.1}$, $A_{3.4}^{-1}$, $A_{3.5}^0$, $A_{2.1}\oplus A_1$ and $A_{2.1}\oplus A_1$,
respectively.
These Lie algebras are defined by the following commutation relations:\par
\bigskip
\begin{tabular}{ll}
${\rm sl}(2,\mathbb R)$: & $[e_1,e_2]=e_1,\ [e_2,e_3]=e_3,\ [e_1,e_3]=-2e_2$;\\[1ex]
$A_{3.1}$:               & $[e_2,e_3]=e_1$;\\[1ex]
$A_{3.4}^{-1}$:          & $[e_1,e_3]=e_1,\ [e_2,e_3]=-e_2$;\\[1ex]
$A_{3.5}^0$:             & $[e_1,e_3]=-e_2,\ [e_2,e_3]=e_1$;\\[1ex]
$A_{2.1}\oplus A_1$:     & $[e_1,e_2]=e_1$,
\end{tabular}
\bigskip\par
\noindent
and they are well known to be non-isomorphic to each other.
This implies the pairwise inequivalence of the above cases of Lie-symmetry extensions with respect to point transformations,
except the pair of Cases~\ref{case10} and~\ref{case11}.

In a similar way, we prove that the maximal Lie invariance algebras
associated with Cases~\ref{case12a}, \ref{case13}$_{\delta=0}$, \ref{case14}$_{\delta=0}$ and~\ref{case15}$_{\delta=0}$
are also not isomorphic to each other although all of them are four-dimensional.
The corresponding abstract Lie algebras are ${\rm sl}(2,\mathbb R)\oplus A_1$ (Case~\ref{case12a}) as well as
$A_{4,8}^a$ with $a=\nu/(2-\nu)$ if $\nu<1$ and with $a=(2-\nu)/\nu$ if $\nu\geqslant1$ (Case~\ref{case13}),
with $a=0$ (Case~\ref{case14}) and with $a=-1$ (Case~\ref{case15}),
where\par
\bigskip
\begin{tabular}{ll}
$A_{4.8}^a$, $|a|\leqslant 1$:\quad & $[e_2,e_3]=e_1$,  $[e_1,e_4]=(1+a)e_1$, $[e_2,e_4]=e_2$, $[e_3,e_4]=ae_3$.
\end{tabular}
\bigskip\par
\noindent
As a result, Cases~\ref{case12a}, \ref{case13}, \ref{case14} and~\ref{case15} of Theorem~\ref{thm:GroupClassificationOf2DSWEs1} are equivalent neither to each other nor to other cases of this theorem with respect to point transformations.
Moreover, the parameter~$\nu$ in Case~\ref{case13} cannot be gauged by point transformations.

In Cases~\ref{case17}, \ref{case18}$_{\delta=0}$, \ref{case19}, \ref{case20}$_{\delta=0}$ and~\ref{case21},
the corresponding maximal Lie invariance algebras are six-dimensional solvable Lie algebras with four-dimensional abelian nilradicals
that are respectively isomorphic to the algebras
$N_{6.1}^{abcd}$ with $(a,b,c,d)=\frac12(1-\beta,1+\beta,1+\beta,1-\beta)$,
$N_{6.2}^{-1,1,1}$,
\smash{$N_{6.13}^{-1/\beta,1/\beta,1,1}$},
\smash{$N_{6.16}^{01}$},
\smash{$N_{6.18}^{0\beta1}$} 
from Turkowski's classification with the following canonical commutation relations:

\bigskip
\begin{tabular}{lll}
$N_{6.1\mathstrut}^{abcd\mathstrut}$ & $^{ac\ne 0\mathstrut}_{b^2+d^2\ne 0\mathstrut}$ &
$\arraycolsep=0ex\begin{array}{l}
[e_1,e_5]=ae_1,\ [e_2,e_5]=be_2,\  [e_4,e_5]=e_4,  \\[.5ex]
[e_1,e_6]=ce_1,\ [e_2,e_6]=d e_2,\ [e_3,e_6]=e_3;
\end{array}$
\\[3.5ex]
$N_{6.2\mathstrut}^{abc\mathstrut}$ &\scriptsize$a^2{+}b^2{\ne}0$ &
$\arraycolsep=0ex\begin{array}{l}
[e_1,e_5]=ae_1,\ [e_2,e_5]=e_2,\   [e_4,e_5]=e_3,\  \\[.5ex]
[e_1,e_6]=be_1,\ [e_2,e_6]=c e_2,\ [e_3,e_6]=e_3,\ [e_4,e_6]=e_4;
\end{array}$
\\[3.5ex]
$N_{6.13\mathstrut}^{abcd\mathstrut}$ & $^{a^2+c^2\ne 0\mathstrut}_{b^2+d^2\ne 0\mathstrut}$ &
$\arraycolsep=0ex\begin{array}{l}
[e_1,e_5]=ae_1,\ [e_2,e_5]=be_2,\ [e_3,e_5]=e_4,\ [e_4,e_5]=-e_3,   \\[.5ex]
[e_1,e_6]=ce_1,\ [e_2,e_6]=de_2,\ [e_3,e_6]=e_3,\ [e_4,e_6]=e_4;
\end{array}$
\\[3.5ex]
$N_{6.16\mathstrut}^{ab\mathstrut}$   &&
$\arraycolsep=0ex\begin{array}{l}
[e_2,e_5]=e_1,\ [e_3,e_5]=ae_3+e_4,\ [e_4,e_5]=-e_3+ae_4, \\[.5ex]
[e_1,e_6]=e_1,\ [e_2,e_6]=e_2,\      [e_3,e_6]=be_3,\ [e_4,e_6]=be_4;
\end{array}$
\\[3.5ex]
$N_{6.18\mathstrut}^{abc\mathstrut}$ &\scriptsize $b{\ne}0$  &
$\arraycolsep=0ex\begin{array}{l}
[e_1,e_5]=e_2,\ [e_2,e_5]=-e_1,\ [e_3,e_5]=ae_3+be_4,\ [e_4,e_5]=-be_3+ae_4, \\[.5ex]
[e_1,e_6]=e_1,\ [e_2,e_6]=e_2,\  [e_3,e_6]=ce_3,\      [e_4,e_6]=ce_4.
\end{array}$
\end{tabular}
\bigskip

\noindent
Since the above isomorphisms are not as obvious as for algebras of lower dimensions,
we present the necessary basis changes to $(e_1,e_2,e_3,e_4,e_5,e_6)$:
\begin{gather*}
\mbox{Case}~\ref{case17}\colon\quad
\big(P(0,e^{\beta t}),\,P(0,e^{-\beta t}),\,P(e^t,0),\,P(e^{-t},0),\,\tfrac12D^{\rm s}+\tfrac12D(1),\,\tfrac12D^{\rm s}-\tfrac12D(1)\big);\\
\mbox{Case}~\ref{case18}\colon\quad
\big(P(e^t,0),\,P(e^{-t},0),\,P(0,1),\,P(0,t),\,-D(1),\,D^{\rm s}\big)\quad\mbox{for}\quad \delta=0;\\
\mbox{Case}~\ref{case19}\colon\quad
\big(P(e^t,0),\,P(e^{-t},0),\,P(0,\cos\beta t),\,P(0,\sin\beta t),\,\beta^{-1}D(1),\,D^{\rm s}\big);\\
\mbox{Case}~\ref{case20}\colon\quad
\big(P(0,1),\,P(0,t),\,P(\sin t,0),\,P(\cos t,0),\,-D(1),\,D^{\rm s}\big)\quad\mbox{for}\quad \delta=0;\\
\mbox{Case}~\ref{case21}\colon\quad
\big(P(\cos t,0),\,P(\sin t,0),\,P(0,\cos\beta t),\,P(0,\sin\beta t),\,D(1),\,D^{\rm s}\big).
\end{gather*}
The Lie algebras from Turkowski's classification appearing in the consideration
are not isomorphic to each other,
including the pairs of algebras from the same series with different values of the parameter~$\beta$
within ranges indicated in the corresponding cases of Theorem~\ref{thm:GroupClassificationOf2DSWEs1}.
The claim on such pairs was checked by direct computation in {\sf Maple}.
This is why Cases~\ref{case17}, \ref{case18}$_{\delta=0}$, \ref{case19}, \ref{case20}$_{\delta=0}$ and~\ref{case21}
are inequivalent with respect to point transformations.
Moreover, the parameter~$\beta$ in Cases~\ref{case17}, \ref{case19} and~\ref{case21} cannot be gauged further.

Analyzing the classification cases listed in Theorem~\ref{thm:GroupClassificationOf2DSWEs1},
the additional equivalences among them that are found in this section
and the above consideration of the necessary algebraic conditions for their inequivalence,
we can suppose that the following assertion holds.

\begin{conjecture}\label{con:2DSWEsGroupClassificationWrtEquivGroupoid}
A complete list of inequivalent (up to all admissible transformations) Lie-symmetry extensions in the class~\eqref{eq:2DSWEs} is exhausted by
Cases~\ref{case1}, \ref{case2}, \ref{case3}, \ref{case4}, \ref{case5}, \ref{case6a}, \ref{case7}$_{\delta=0}$, \ref{case8}, \ref{case9},
\ref{case10}, \ref{case11}, \ref{case12a}, \ref{case13}$_{\delta=0}$, \ref{case14}$_{\delta=0}$,
\ref{case15}$_{\delta=0}$, \ref{case16a}$_{\delta=0}$, \ref{case17}, \ref{case18}$_{\delta=0}$, \ref{case19}, \ref{case20}$_{\delta=0}$, \ref{case21}
and~\ref{case22a} of Theorem~\ref{thm:GroupClassificationOf2DSWEs1}.
\end{conjecture}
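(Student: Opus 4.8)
The plan is to split the assertion into a completeness part --- every case of Theorem~\ref{thm:GroupClassificationOf2DSWEs1} is reducible by an admissible transformation to one of the listed cases --- and a mutual-inequivalence part --- no admissible transformation of the class~\eqref{eq:2DSWEs} maps one listed case to another. The completeness part is essentially exhausted by the three families of additional admissible transformations displayed above, which reduce Cases~\ref{case6b}, \ref{case6c}, \ref{case12b}, \ref{case12c}, \ref{case16b}, \ref{case16c}, \ref{case22b}, \ref{case22c}, \ref{case22d} and the $\delta=1$ subcases of~\ref{case7}, \ref{case13}, \ref{case14}, \ref{case15}, \ref{case16a}, \ref{case18}, \ref{case20} to retained ones. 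What remains for this part is to verify that no \emph{further} admissible reduction exists among the cases kept in the list, and this is precisely the content of the inequivalence part, so the two parts are not independent.

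For the inequivalence part I would proceed, as in Section~\ref{sec:AdditionalEquivalencesSWE}, through necessary algebraic conditions: a point transformation between two systems induces a similarity of their maximal Lie invariance algebras, hence an abstract isomorphism $\mathfrak g_b\cong\mathfrak g_{\tilde b}$. Sorting the retained cases first by $\dim\mathfrak g_b$ and then by abstract isomorphism type, via Mubarakzianov's and Turkowski's classifications, already separates the overwhelming majority of pairs: all six-dimensional cases~\ref{case17}, \ref{case18}$_{\delta=0}$, \ref{case19}, \ref{case20}$_{\delta=0}$, \ref{case21}; all four-dimensional cases~\ref{case12a}, \ref{case13}$_{\delta=0}$, \ref{case14}$_{\delta=0}$, \ref{case15}$_{\delta=0}$; and all three-dimensional cases except the pair~\ref{case10} and~\ref{case11}, both isomorphic to $A_{2.1}\oplus A_1$. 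The residual pairs that the abstract-isomorphism criterion cannot resolve are therefore exactly this three-dimensional pair together with the two-dimensional cases, which fall into only the abelian class (Cases~\ref{case1}$_{\nu=2}$ and~\ref{case3}) and the non-abelian class (Cases~\ref{case1}$_{\nu\ne2}$, \ref{case2}, \ref{case4} and~\ref{case5}).

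To settle these residual pairs I would refine the criterion to the level of realizations and, where that still fails, compute the relevant part of the equivalence groupoid directly. The structural input is that any admissible transformation must carry the spanning vector fields of $\mathfrak g_\spanindex$ into vector fields of the same type: the explicit flows listed after~\eqref{eq:2DSWEsRepresentationsForLieSymVFs} show that an arbitrary element of the pseudogroup generated by $\mathfrak g_\spanindex$ is a composition of a reparametrization $t\mapsto T(t)$ (the flow of some $D(F^1)$), a concordant rotation and scaling (from $J$ and $D^{\rm s}$), and a generalized boost (the flow of some $P(F^2,F^3)$), followed by a shift and scaling of~$b$. Indeed the first additional family above is nothing but such a flow with $T=\tan t$. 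Imposing that the target arbitrary element~$\tilde b$ again be independent of~$t$ then forces constraints coupling $b$, $\tilde b$ and~$T$; carrying these out for the residual pairs should either produce an explicit admissible transformation or prove its nonexistence, thereby deciding equivalence. The differential invariants of the sources~$b$ built from $b$, its gradient and its Hessian that are preserved by this pseudogroup furnish the candidate obstructions.

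The step I expect to be the main obstacle is precisely this direct computation of the equivalence groupoid for the low-dimensional cases. Unlike the determination of~$G^\sim$ in Section~\ref{sec:EquivalenceGroupSWE}, where the arbitrary element transforms by a fixed rule and the determining equations decouple, here the determining equations for an admissible transformation couple source and target arbitrary elements nonlinearly with the transformation components; moreover the class is not semi-normalized, so its groupoid is genuinely larger than the subgroupoid generated by~$G^\sim$ and the point symmetry groups, and it cannot be read off from~$G^\sim$ alone. The abstract-isomorphism invariant is by itself only necessary and cannot certify inequivalence for the residual pairs; supplying a sufficient criterion --- whether by a complete description of the groupoid or by a sharper set of realization-level differential invariants --- is the crux on which upgrading the conjecture to a theorem depends.
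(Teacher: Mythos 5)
This statement is labelled a \emph{conjecture} in the paper precisely because the paper offers no proof of it: Section~\ref{sec:AdditionalEquivalencesSWE} only supplies the supporting evidence (the three families of additional admissible transformations for completeness, and the dimension/isomorphism-type comparison via Mubarakzianov's and Turkowski's classifications for most inequivalences) and explicitly defers the residual cases to a future construction of the equivalence groupoid. Your outline reproduces exactly this reasoning and correctly isolates the same unresolved residue --- the sets $\{\ref{case1}_{\nu=2},\ref{case3}\}$, $\{\ref{case1}_{\nu\ne2},\ref{case2},\ref{case4},\ref{case5}\}$ and $\{\ref{case10},\ref{case11}\}$, plus the possible further gauging of parameters --- so it is an accurate account of the paper's approach rather than a proof; note also that your plan for the residual pairs tacitly assumes that every admissible transformation lies in the pseudogroup generated by $\mathfrak g_\spanindex$ composed with affine maps of~$b$, which is itself only Conjecture~\ref{con:2DSWEsEquivGroupoid} of the paper.
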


To prove this conjecture, we need to complete the verification of inequivalence of cases
within the sets of cases 
$\{\ref{case1}_{\nu=2},\ref{case3}\}$, 
$\{\ref{case1}_{\nu\ne2},\ref{case2},\ref{case4},\ref{case5}\}$ 
and $\{\ref{case10},\ref{case11}\}$
as well as of impossibility of further gauging of constant parameters remaining in some cases.
This could be done via constructing the equivalence groupoid of the class~\eqref{eq:2DSWEs},
which is a nontrivial and cumbersome problem.
It is quite difficult to prove even principal properties of admissible transformations of the class~\eqref{eq:2DSWEs},
which can be conjectured after analyzing the equivalence transformations of this class,
the Lie symmetries of equations from this class and
the three obtained $G^\sim$-inequivalent families of admissible transformations. 
These properties include 
the affineness with respect to the dependent variables,
the fiber-preservation, i.e., the projectability to the space with the coordinates $(t,x,y)$,
as well as the projectability to the space with the coordinate $t$.
We can also conjecture the explicit structure of the equivalence groupoid of the class~\eqref{eq:2DSWEs}.

\begin{conjecture}\label{con:2DSWEsEquivGroupoid}
$G^\sim$-inequivalent non-identity admissible transformations of the class~\eqref{eq:2DSWEs}
that are independent up to inversion and composing with each other 
and with admissible transformations generated by Lie symmetries of systems from this class
are exhausted by the three families found in this section.
\end{conjecture}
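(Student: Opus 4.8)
The plan is to prove the conjecture by determining the entire equivalence groupoid of the class~\eqref{eq:2DSWEs} and then quotienting it by~$G^\sim$ and by the point symmetry groups of the individual systems. An admissible transformation is a triple $(b,\mathcal T,\tilde b)$ in which $\mathcal T\colon(\tilde t,\tilde x,\tilde y,\tilde u,\tilde v,\tilde h)=(T,X,Y,U,V,H)(t,x,y,u,v,h)$ maps solutions of~$\mathcal L_b$ to solutions of~$\mathcal L_{\tilde b}$. As in the proof of Theorem~\ref{dsym}, I would use the direct method: express the transformed derivatives $\tilde w^i_{\tilde t}$, $\tilde w^i_{\tilde x}$, $\tilde w^i_{\tilde y}$ through the chain rule, substitute them into the copy of~\eqref{eq:2DSWEs} written in the tilded variables, and require the result to hold on the solution manifold of~$\mathcal L_b$. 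Splitting with respect to the parametric derivatives of $(u,v,h)$ then produces both the determining equations for the components of~$\mathcal T$ and a relation expressing~$\tilde b$ through~$b$ and~$\mathcal T$.

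The decisive stage is a structural reduction whose finite-transformation content mirrors the infinitesimal computation of Section~\ref{sec:PreliminaryAnalysisSWE}. Those determining equations~\eqref{deteq2.2} that do not involve the arbitrary element integrate to the form~\eqref{dete2.1}, confining every Lie-symmetry vector field to~$\mathfrak g_{\spanindex}$. The analogous $b$-free part of the determining equations for~$\mathcal T$ should, after one establishes the three structural properties flagged in the text---fiber-preservation (independence of $T$, $X$, $Y$ from the dependent variables), projectability to~$t$ ($T=T(t)$), and affineness of $U$, $V$, $H$ in $(u,v,h)$---force the projection of~$\mathcal T$ to the $(t,x,y,u,v,h)$-space into the pseudogroup generated by~$\mathfrak g_{\spanindex}$ together with the two discrete reflections of Theorem~\ref{dsym}. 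Concretely, $\mathcal T$ would then be an arbitrary reparametrization $T=T(t)$ accompanied by a $t$-dependent scaling, rotation and shift of $(x,y)$, the concomitant affine action on $(u,v)$, and the factor $H=T_t^{-1}h$ dictated by the continuity equation.

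With the projection pinned down, each admissible transformation is the lift of an element of this pseudogroup, and its effect on the arbitrary element is governed by the classifying equation~\eqref{eq:2DSWEsClassifyingEq}: substituting the reduced~$\mathcal T$ and using that both $b$ and~$\tilde b$ are free of~$t$ yields an overdetermined system whose integrability conditions coincide with the second-order ordinary differential equations in~$t$ already organizing~\eqref{eq:2DSWEsClassifyingEq}. Solving this system classifies the pairs $(b,\tilde b)$ that can be connected; quotienting the resulting connections by~$G^\sim$ (Theorem~\ref{2d_gen_equiv}) and by the vertex groups---the Lie symmetries of each~$\mathcal L_b$, which act as stabilizers---should leave exactly the three families exhibited above. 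I would finish by checking that these families are genuinely independent, in particular that the residual constants such as~$\nu$ in Case~\ref{case13} and~$\beta$ in Cases~\ref{case17}, \ref{case19} and~\ref{case21} cannot be absorbed.

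The main obstacle is the structural reduction itself, which the authors explicitly single out as difficult. Unlike the $G^\sim$ computation, where the coefficients turned out constant, here the putative transformation carries $t$-dependent coefficients and the determining system is not triangular; one must exclude admissible transformations that genuinely entangle the dependent and independent variables before the fiber-preserving, affine normal form can be justified. A secondary but still heavy difficulty is completeness: the overdetermined system for the $t$-dependent parameters has to be integrated in full generality, and one must verify that no admissible transformation outside the three families survives modulo $G^\sim$, Lie symmetries, inversion and composition. Neither step introduces a conceptually new idea, but both are computationally demanding, which is exactly why the statement is offered as a conjecture rather than proved here.
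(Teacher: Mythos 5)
The statement you are addressing is not proved in the paper: it is explicitly labelled a conjecture, and the authors state that constructing the equivalence groupoid of the class~\eqref{eq:2DSWEs} is ``a nontrivial and cumbersome problem'' and that ``it is quite difficult to prove even principal properties of admissible transformations,'' namely the very fiber-preservation, projectability to~$t$ and affineness in the dependent variables on which your argument pivots. So there is no paper proof to compare against; what can be assessed is whether your text constitutes a proof. It does not --- it is a plan, and you say so yourself in the closing paragraph.

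The concrete gaps are the following. First, the structural reduction is asserted, not derived. In the infinitesimal computation of Section~\ref{sec:PreliminaryAnalysisSWE} the determining equations split cleanly into a $b$-free subsystem (integrating to~\eqref{dete2.1}) and classifying equations, because the splitting is done with respect to the parametric derivatives of $(u,v,h)$ for a \emph{fixed} $b$. For a finite admissible transformation between two \emph{different} systems $\mathcal L_b$ and $\mathcal L_{\tilde b}$, the direct-method equations couple $T,X,Y,U,V,H$ to both $b$ and $\tilde b$ through the momentum equations, and one cannot a priori isolate a ``$b$-free part'' that forces the projection of $\mathcal T$ into the pseudogroup generated by $\mathfrak g_\spanindex$ and the reflections of Theorem~\ref{dsym}; this is exactly the step the authors single out as open. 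Second, your appeal to the classifying equation~\eqref{eq:2DSWEsClassifyingEq} to govern the action on the arbitrary element is an infinitesimal statement; its finite counterpart has to be derived independently and is not a formal consequence of~\eqref{eq:2DSWEsClassifyingEq}. Third, completeness --- that after quotienting by $G^\sim$, by the vertex groups and by composition/inversion only the three listed families survive --- requires integrating the resulting overdetermined system over \emph{all} source values of $b$, in particular over every classification case of Theorem~\ref{thm:GroupClassificationOf2DSWEs1} and over generic $b$ with $\mathfrak g_b=\mathfrak g^\cap$; none of this is carried out. Your outline points in the direction the authors themselves indicate for future work, but as written it establishes nothing beyond what the paper already conjectures.
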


\section{Conclusion}\label{sec:ConclusionsSWE}

We solved the group classification problem for the class~\eqref{eq:2DSWEs}
of two-dimensional shallow water equations with variable bottom topography.
The result is summarized in Theorem~\ref{thm:GroupClassificationOf2DSWEs1}.

Applying the algebraic method, we first construct the generalized equivalence group $G^\sim$ of the class~\eqref{eq:2DSWEs},
which is presented in Theorem~\ref{2d_gen_equiv}
and which is a necessary ingredient for solving the group classification problem.
Note that the generalized equivalence group of the class~\eqref{eq:2DSWEs} coincides with its usual equivalence group.

The integration of the system of determining equations for the components of Lie symmetry vector fields,
which is quite complicated in this case, required the application of the advanced method of furcate splitting.
This method was additionally optimized via reducing
the study of compatibility of template-form equations for the arbitrary element~$b$
to checking whether the set of vector fields associated with these equations
is closed with respect to the Lie bracket of vector fields.
In the course of the classification, we continuously used transformations from the equivalence group~$G^\sim$
for gauging various constants involved in the specific values of the arbitrary element~$b$,
which leads to a significant simplification of computations.

One more complication of the group classification for the class~\eqref{eq:2DSWEs} is that
this class is not normalized and even not semi-normalized.
In other words, this class possesses admissible point transformations
which cannot be decomposed 
into those generated by equivalence transformations of the class
and those generated by point symmetry transformations of systems belonging to it.
Such admissible transformations give rise to additional point equivalences
among the $G^\sim$-inequivalent classification cases listed in Theorem~\ref{thm:GroupClassificationOf2DSWEs1}.
In Section~\ref{sec:AdditionalEquivalencesSWE} we found
three families of $G^\sim$-inequivalent independent admissible transformations of the above kind in the class~\eqref{eq:2DSWEs}
and presented
the corresponding additional equivalences within the group-classification list for this class up to the $G^\sim$-equivalence.
Moreover, for all the pairs of listed cases that could be inequivalent to each other with respect to point transformations,
we checked their inequivalences via comparing the structure of the corresponding maximal Lie invariance algebras,
except for the inequivalences within three small subsets of cases with two- or three-dimensional maximal Lie invariance algebras.
This allowed us to conjecture the group classification of the class~\eqref{eq:2DSWEs} up to its equivalence groupoid~$\mathcal G^\sim$.

In future work, we plan to continue our study of shallow water equations to extend and generalize the obtained results.
The presented classification of Lie symmetries
of systems of two-dimensional shallow water equations with variable bottom topography
provides the basis for a wide research program for these systems 
within the framework of group analysis of differential equations, 
including classifications of admissible transformations, invariant solutions, 
generalized symmetries, cosymmetries, local conservations laws and Hamiltonian structures.

The first natural step in the further study is to describe 
the equivalence groupoid~$\mathcal G^\sim$ of the class~\eqref{eq:2DSWEs},
proving Conjecture~\ref{con:2DSWEsEquivGroupoid}.
If this conjecture is proved,
then the proof of Conjecture~\ref{con:2DSWEsGroupClassificationWrtEquivGroupoid} will be straightforward.
Moreover, the class~\eqref{eq:2DSWEs} will then give,
in addition to the class studied in~\cite[Section~X]{opan17a} and in~\cite{opan19a},
one more example of a class that is not semi-normalized
but in which all admissible transformations not generated by equivalence transformations
are still related to Lie-symmetry extensions.
An interesting question about the structure of~$\mathcal G^\sim$ is
how many $G^\sim$-inequivalent maximal conditional equivalence groups of the class~\eqref{eq:2DSWEs} exist;
see related definitions in~\cite{popo06Ay,popo10Ay}.

An obvious possibility for using Lie symmetries of systems from the class~\eqref{eq:2DSWEs} 
is given by Lie-symmetry reductions of these systems and the successive construction of their exact invariant solutions.
Such solutions can be employed to testing numerical schemes for the shallow water equations.
Moreover, Lie symmetries themselves may be applied 
for designing invariant parameterization and numerical schemes~\cite{bihlo2017a,popo12a}.
The detected additional point equivalences within the classification list are of great relevance here
since they allow one to avoid repeating the construction of exact solutions for systems
that are similar to simpler systems with respect to point transformations. 
We have already started to carry out
the Lie-reduction procedure and the construction of invariant parameterization schemes 
for the system of shallow water equations with the flat bottom topography.

The study of zeroth-order conservations laws of systems from the class~\eqref{eq:2DSWEs} 
was initiated in~\cite{akse18b,atam19a} but the results obtained there were preliminary or not exhaustive.
Extending the optimized version of the method of furcate splitting 
suggested in Section~\ref{sec:ProofClassification} to conservation laws 
and following the solution of the group classification problem in the present paper, 
in~\cite{bihlo2019b} we have classified zeroth-order conservation laws 
of systems from the class~\eqref{eq:2DSWEs} up to $G^\sim$-equivalence 
and indicated all additional point equivalences between the listed cases 
of extensions of the space of zeroth-order conservation laws. 
We have also constructed minimal generating sets of zeroth-order conservation laws 
for these systems under the action of corresponding Lie-symmetry groups therein. 
Moreover, generalizing the well-known Hamiltonian representation for 
the shallow water equations with flat bottom topography, 
we have derived the Hamiltonian representations for all systems from the class~\eqref{eq:2DSWEs}. 
This has allowed us to connect the classification results of~\cite{bihlo2019b} and the present paper 
via the Noether relation between Hamiltonian symmetries and conservation laws. 
Due to knowing the Hamiltonian representations, 
we have shown that each system from the class~\eqref{eq:2DSWEs} admits,  
along with zeroth-order conservation laws, 
an infinite-dimensional space of first-order conservation laws 
associated with distinguished (Casimir) functionals of the Hamiltonian operator, 
which is common for all such systems. 

We conjecture that for any system from the class~\eqref{eq:2DSWEs},
its generalized symmetries are in fact exhausted, 
up to the equivalence of generalized symmetries, by its Lie symmetries.  
If this conjecture is proved, then the Noether relation will imply 
that for any system from the class~\eqref{eq:2DSWEs}
the entire space of its local conservation laws is spanned 
by the above zeroth- and first-order conservation laws.

\section*{Acknowledgments}
The authors are grateful to the reviewers for useful suggestions. 
The authors also thank Michael Kunzinger, Dmytro Popovych and Galyna Popovych for helpful discussions.
The research of AB and NP was undertaken, in part, thanks to funding from the Canada Research Chairs program,
the InnovateNL LeverageR{\&}D program and the NSERC Discovery program.
The research of ROP was supported by the Austrian Science Fund (FWF), projects P25064 and P30233.

\footnotesize

\end{document}